\newcommand{\card}[1]{\ensuremath{\left|#1\right|}\xspace}
\newcommand{\range}[2][0]{\ensuremath{[|#1, #2|]}\xspace}
\newcommand{\pluseq}{\mathrel{+{=}}}
\newcommand{\minuseq}{\mathrel{-{=}}}
\newcommand*{\transpose}[1]{#1^{\mkern-1.5mu\mathsf{T}}}
\newcommand{\bigo}[1]{\ensuremath{\mathcal{O}(#1)}\xspace}
\newcommand{\floor}[1]{\ensuremath{\left\lfloor #1 \right\rfloor}\xspace}
\newcommand{\restr}[2]{\ensuremath{#1_{|#2}}\xspace}
\newcommand{\footprint}[1]{\ensuremath{\tau\left(#1\right)}\xspace}
\newcommand{\set}[2]{\ensuremath{\left\{ #1 \,|\, #2 \right\}}\xspace}
\newcommand{\mat}[1]{\ensuremath{\mathbf{#1}}\xspace}
\newcommand{\TB}{\ensuremath{\text{TB}}\xspace}
\newcommand{\tbs}{{TBS}\xspace}
\newcommand{\lbc}{{LBC}\xspace}
\newcommand{\oocsyrk}{{OCS}\xspace}
\newcommand{\ooctrsm}{{OCT}\xspace}
\newcommand{\oocchol}{{OCC}\xspace}
\newcommand{\qty}[2]{\ensuremath{Q_{\text{#1}}\left(#2\right)}\xspace}
\newcommand{\qtyvals}[2]{\ensuremath{\widetilde{Q}_{\text{#1}}\left(#2\right)}\xspace}
\newcommand{\alg}[1]{\textbf{#1}}
\newcommand{\TBS}{\alg{TBS}\xspace}
\newcommand{\LBC}{\alg{LBC}\xspace}
\newcommand{\OOCSYRK}{\alg{OOC\_SYRK}\xspace}
\newcommand{\OOCTRSM}{\alg{OOC\_TRSM}\xspace}
\newcommand{\OOCCHOL}{\alg{OOC\_CHOL}\xspace}
\begin{document}

\title{I/O-Optimal Algorithms for Symmetric Linear Algebra Kernels}

\author{Olivier Beaumont}
\email{olivier.beaumont@inria.fr}
\author{Lionel Eyraud-Dubois}
\email{lionel.eyraud-dubois@inria.fr}
\author{Mathieu Vérité}
\email{mathieu.verite@inria.fr}
\affiliation{%
  \institution{Univ. Bordeaux, CNRS, Bordeaux INP, Inria, LaBRI, UMR 5800}
  \city{Talence}
  \country{France}
  \postcode{F-33400}
}

\author{Julien Langou}
\affiliation{%
  \institution{University of Colorado Denver}
  \city{Denver}
  \state{Colorado}
  \country{USA}}
\email{julien.langou@ucdenver.edu}


\begin{abstract}
  In this paper, we consider two fundamental symmetric kernels in
  linear algebra: the Cholesky factorization and the symmetric
  rank-$k$ update (SYRK), with the classical three nested loops algorithms
  for these kernels. In addition, we consider a machine model with a fast memory of size $S$ and an unbounded slow memory. In this model, all computations must be performed on operands in fast memory, and the goal is to minimize the amount of communication between slow and fast memories.  As the set of computations is fixed by the choice of the algorithm, only the ordering of the computations (the schedule) directly influences the volume of communications.
  
  We prove lower bounds of $\frac{1}{3\sqrt{2}}\frac{N^3}{\sqrt{S}}$
  for the communication volume of the Cholesky factorization of an
  $N\times N$ symmetric positive definite matrix, and of
  $\frac{1}{\sqrt{2}}\frac{N^2M}{\sqrt{S}}$ for the SYRK computation
  of $\mat{A}\cdot\transpose{\mat{A}}$, where $\mathbf{A}$ is an $N\times M$
  matrix. Both bounds improve the best known lower bounds 
  from the literature by a factor $\sqrt{2}$.

  In addition, we present two out-of-core, sequential algorithms with
  matching communication volume: \TBS for SYRK, with a volume of
  $\frac{1}{\sqrt{2}}\frac{N^2M}{\sqrt{S}} + \bigo{NM\log N}$, and \LBC for
  Cholesky, with a volume of $\frac{1}{3\sqrt{2}}\frac{N^3}{\sqrt{S}}
  + \bigo{N^{5/2}}$. Both algorithms improve over the best known
  algorithms from the literature by a factor $\sqrt{2}$, and prove
  that the leading terms in our lower bounds cannot be improved further. This work shows
  that the operational intensity of symmetric kernels like SYRK or
  Cholesky is intrinsically higher (by a factor $\sqrt{2}$) than that
  of corresponding non-symmetric kernels (GEMM and LU factorization).
\end{abstract}


\begin{CCSXML}
<ccs2012>
   <concept>
       <concept_id>10003752.10003809.10010170.10010171</concept_id>
       <concept_desc>Theory of computation~Shared memory algorithms</concept_desc>
       <concept_significance>300</concept_significance>
       </concept>
   <concept>
       <concept_id>10002950.10003705.10003707</concept_id>
       <concept_desc>Mathematics of computing~Solvers</concept_desc>
       <concept_significance>300</concept_significance>
       </concept>
   <concept>
       <concept_id>10003752.10003777.10003780</concept_id>
       <concept_desc>Theory of computation~Communication complexity</concept_desc>
       <concept_significance>500</concept_significance>
       </concept>
 </ccs2012>
\end{CCSXML}

\ccsdesc[300]{Theory of computation~Shared memory algorithms}
\ccsdesc[300]{Mathematics of computing~Solvers}
\ccsdesc[500]{Theory of computation~Communication complexity}

\keywords{communication-avoiding algorithms, linear algebra, symmetric kernels, syrk, cholesky}


\maketitle

\section{Introduction}
\label{sec.intro}

Dense matrix factorizations play a significant role in scientific computing. In
particular, symmetric positive definite matrices appear in many applications; the
Cholesky factorization is a dedicated factorization algorithm for such matrices. 
With the increase of both the scale of the platforms and the problems to solve, 
minimizing communications for scientific computing, and in particular for these factorization
kernels, is crucial to reach the peak performance from modern
hardware. In addition to its effect on execution time, the volume of data movement 
has a major impact on energy consumption during a computation, so that 
reducing the volume of data movement will result in a reduction of the 
energy required by a computation.

To evaluate data movements, we consider in this paper an elementary model from the literature, with a computation unit associated to a fast memory of size $S$ where the operands of computations must be located, and a slow memory. In this context, the objective is to perform a set of computations, given by the algorithm, while minimizing the data movements between the fast memory and the slow memory. 
Different orderings (schedules) for a given set of computations (algorithm) 
may induce different volume of
communication between slow and fast memories. We are interested both in establishing the minimum
volume of communication needed for a given computation and in finding
optimal schedules.

The \emph{operational intensity} (OI), defined as the ratio of the
number of arithmetic operations to the volume of data movement to/from
memory, is a critical metric for comparing the efficiency of various
algorithms and their schedules. Since the number of operations for our
computation is known and fixed, studying the operational intensity is
similar to studying the volume of data movement. We want to increase
OI, we do so by reducing the volume of communication.

In this paper, we study the OI of the Cholesky
factorization and of the symmetric rank-$k$ update (SYRK): we derive
lower bounds of the volume of data movement that any schedule has to
perform for each of these algorithms.  Such bounds are valid both for
parallel algorithms and out-of-core sequential algorithms, and we
provide out-of-core sequential algorithms whose communication
complexities asymptotically match our lower bounds.

It is known~\cite{ballard2010communication} that performing a Cholesky
factorization of an $N\times N$ matrix with a memory of size $S$
requires $\Omega\left(\frac{N^3}{\sqrt{S}}\right)$ data
transfers. Regarding the optimal constant,
on the one hand, recent work on automated lower bound
derivation~\cite{olivry2020automated} show that the constant is at
least $\frac{1}{6}$; and, on the other hand, an algorithm by
B\'ereux~\cite{doi:10.1137/06067256X}
proves that the constant is at most $\frac{1}{3}$. 
This represents a factor of 2 between the largest known lower bound 
and the smallest known upper bound.
In this paper, we close this gap by increasing the lower bound by a
factor $\sqrt{2}$ and decreasing the upper bound by the same factor
therefore reaching optimality.

This also shows that the algorithm of B\'ereux~\cite{doi:10.1137/06067256X} is
not optimal which is a surprising result: a rule of thumb, until this present paper,
has been that the minimum volume of data transfer, in dense linear operations,
is ``number of operations divided by $\sqrt{S}$``. For the Cholesky
factorization, this would give the constant to be $\frac{1}{3}$ and make
B\'ereux's algorithm~\cite{doi:10.1137/06067256X} optimal.

Similar results exist for the SYRK kernel, which computes the symmetric matrix
$\mat{C} = \mat{A}\cdot\transpose{\mat A}$ where \mat{A} is a $N \times M$
matrix. With a memory of size $S$, $\Theta\left(\frac{N^2M}{\sqrt{S}}\right)$
transfers are required. The \OOCSYRK algorithm by
B\'ereux~\cite{doi:10.1137/06067256X} achieves a constant of $1$, and a recent
work~\cite{olivry2020automated} provides a lower bound with a constant of
$\frac{1}{2}$. Again, it was believed that the optimal value was
$1$. Again, in this paper, we prove that the best known lower  
bound~\cite{olivry2020automated} can be increased by a factor of $\sqrt{2}$ and
the I/O volume of the best known algorithm~\cite{doi:10.1137/06067256X} 
decreased by the same factor, hence reaching optimality.

Previous works on the SYRK kernel take the symmetry of the operation
$\mat{C} = \mat{A}\cdot\transpose{\mat A}$ into account by computing
the lower half of the matrix $\mathbf{C}$. However the fact that the
matrix \mat{A} appears twice on the right-hand side has not been
exploited: if $A_{j,i}$ is required while $A_{i,j}$ is in the fast
memory, the schedule loads $A_{j,i}$ anyway. Indeed it is akin to
computing the lower half of $\mat{C} = \mat{A}\cdot\mat B$, where
\mat{A} is a $N \times M$ matrix and \mat{B} is a $M \times N$
matrix. Similarly, a lower bound for Cholesky is derived
in~\cite{kwasniewski2021parallel2} under the constraint that it is
forbidden to use $A_{i,j}$ when $A_{j,i}$ is available. The lower bound obtained with
such a constraint is potentially too large: there may exist algorithms
which perform fewer data transfers by making a better use of symmetry.

In this paper, we precisely show how to exploit the symmetry of input to
evaluate the volume of I/O actually required to perform SYRK and Cholesky and
derive lower bounds (thus potentially lower than bounds that do not take it
into account). We also present algorithms which make explicit use of the
symmetry to reduce data movement.


After a more detailed presentation of the related works in
Section~\ref{sec.related} and a presentation of the methodology in
Section~\ref{sec:methodology}, we thus present our 4 contributions:
\begin{itemize}
\item an improvement by a factor $\sqrt{2}$ of the best known lower
  bound for the communication requirements of the SYRK kernel (from
  $\frac{1}{2}$ to $\frac{1}{\sqrt{2}}$,
  Section~\ref{subsec:bounds.syrk});
\item an application of this result to the Cholesky factorization,
  improving over the best known lower bound by a factor
  $\sqrt{2}$ (from $\frac{1}{6}$ to $\frac{1}{3\sqrt{2}}$,
  Section~\ref{subsec:chol.lower.bound});
\item a \TBS algorithm for the SYRK kernel which makes use of the
  symmetry of the computations to reduce the communication volume by a
  factor $\sqrt{2}$ over previous approaches (from $1$ to
  $\frac{1}{\sqrt{2}}$, Section~\ref{subsec:tbs});
\item a \LBC algorithm for the Cholesky factorization, which uses \TBS
  and a large-block, right-looking approach to obtain a
  communication-optimal schedule (Section~\ref{subsec:lbc}),
  providing a $\sqrt{2}$ improvement over the best known algorithm
  (from $\frac{1}{3}$ to $\frac{1}{3\sqrt{2}}$).
\end{itemize}

Our results provide a proof that the maximal operational intensity for
the multiplication operations in SYRK and Cholesky is
$\sqrt{\frac{S}{2}}$, or equivalently $\sqrt{2S}$ when also counting
the addition operations. This is to be compared to the equivalent
results for matrix-matrix multiplication (GEMM) and LU factorization,
for which the maximal operational intensity is ${\sqrt{S}}$ (see
Table~1 in~\cite{olivry2020automated}). Our work shows that symmetric
operations have intrinsically higher operational intensity, and our
algorithms provide insight about how to take advantage of that. At
this stage, we do not claim that our algorithms can be used in a
practical setting, but their existence shows that the lower bounds
that we obtain are the best possible.


\section{Related Works}

\label{sec.related}


Research work regarding the estimation of data transfers required to
perform linear algebra kernels heavily rely on two simplified machine
models:

\begin{enumerate}
  \item Two-levels memory model: the machine features one fast and
    limited memory of size $S$, one ``slow'' and unlimited memory. 
    Input required for any computation must reside in fast memory to 
    be performed, while the data initially resides in slow memory.

  \item Parallel model: $P$ nodes, each with a memory of size 
    $S$, can communicate through a network.
\end{enumerate}

Both models are highly related: the two-level model can be used to
study the volume of communication of a single node in a parallel
machine, since the set of all other nodes can be viewed as a single
"slow" memory with which data transfers occur. Thus, most lower bounds
are actually obtained in the two-level model, and then transferred to the
parallel model. For the purpose of comparison, only bandwidth
communication cost estimations are of interest to us: we do not
consider latency issues or bound the number of messages, we focus on
the number of data elements transferred.

\subsection{Two-level sequential model}

The paper from Hong and Kung~\cite{Jia-Wei:1981:ICR:800076.802486} can be considered as
the founding piece of subsequent development on the topic. In this
work, the authors consider a two-level memory machine. A set of rules
referred as the \emph{pebbling game} models the required data
transfers between the two types of memory. Based on the analysis of
the \emph{computational DAG}, the authors derive asymptotic lower
bounds for the number of data transfers required between the two
levels of memory. For classical $N \times N$ matrix multiplication
(\textit{i.e.} requiring $\bigo{N^{3}}$ operation), Hong and Kung
prove that $\Omega\left(\frac{N^{3}}{\sqrt{S}}\right)$ data transfers
are required.

Irony~\emph{et al.} improve this result in \cite{irony2004communication}
through the expression of a ''memory-communication tradeoff'' in the
context of two-levels memory model. For $N \times N$ matrix multiplication,
the total number of data transfers between slow and fast memory actually is
$\Theta(\frac{N^{3}}{\sqrt{S}})$. This however assumes a limited fast memory
size: $S \leqslant \frac{N^{2}}{\sqrt[3]{32}} $.

In \cite{ballard2010communication}, Ballard \emph{et al.} extend
Hong and Kung's result to Cholesky factorization using a reduction
technique: by observing that a $\frac{N}{3} \times \frac{N}{3}$ matrix
multiplication can be carried out through a $N \times N$ Cholesky
factorization, they prove that the communication costs of the latter
method are only a constant factor of the former. Therefore, the
asymptotic bounds established in \cite{Jia-Wei:1981:ICR:800076.802486} hold: performing
a $N \times N$ Cholesky factorization requires at least
$\Omega(\frac{N^{3}}{\sqrt{S}})$ data transfers. This result has been
later generalized to a broader variety of
kernels~\cite{DBLP:journals/siammax/BallardDHS11,ballard_carson_demmel_hoemmen_knight_schwartz_2014}

This line of work enables to establish very general bounds, for a
broad range of kernels, including sparse computations, and provides
algorithms with matching communication complexity. However, these
algorithms are only asymptotically optimal: they achieve a
communication volume $\bigo{\frac{N^{3}}{\sqrt{S}}}$ for Cholesky for
example, but the respective hidden constant factors of the lower
bounds and the algorithms can be significantly different.

Very recently,
automatic \emph{cDAG} analysis techniques have led to refinement of
lower bounds for several kernels at once, meaning that the constant
factor of the dominant term is explicitly provided. In particular,
Olivry~\emph{et al.}~\cite{olivry2020automated,olivry:hal-03200539}
derive lower bounds on data transfers (in the context of the out-of-core model)
for any kernel expressed as an affine program. Among other results,
they establish that Cholesky factorization requires at least
$\frac{N^{3}}{6\sqrt{S}}+\bigo{N^{2}}$ I/O operations, and that SYRK
requires at least $\frac{1}{2}\frac{N^2M}{\sqrt{S}}+\bigo{NM}$ I/O
operations. This work also presents a tool which computes an efficient
tiling scheme according to the available memory size. The tool is however
limited to rectangular tilings.

Independently, using explicit enumeration of data reuse, Kwasniewski~\emph{et
 al.}~\cite{kwasniewski2021parallel2} obtain a corresponding lower
bound for LU factorization: their proof is in a parallel context, but
their arguments show that the
minimum number of data transfers is lower bounded by
$\frac{2}{3}\frac{N^{3}}{\sqrt{S}}$. They also propose a
generalization to Cholesky factorization and obtain an improved
$\frac{1}{3}\frac{N^{3}}{\sqrt{S}}$ lower bound, which however makes
the implicit assumption that there is no data reuse related to the
symmetry of the matrix as discussed in Section~\ref{sec.intro}.


In 2009, B{\'e}reux~\cite{doi:10.1137/06067256X} proposes a sequential
out-of-core Cholesky algorithm with ``narrow blocks'' that performs
$\frac{N^{3}}{3\sqrt{S}}+ \bigo{N^{2}}$ I/O operations, without making
use of the symmetry of the matrix. This matches the lower bound from
Kwasniewski~\emph{et al.}, showing that this is the best possible
bound in the context of this implicit assumption.  The same paper also
mentions an out-of-core SYRK algorithm, based on similar ideas, which
performs $\frac{MN^2}{\sqrt{S}} + \bigo{MN}$ I/O operations.





\subsection{Parallel model}

In~\cite{irony2004communication} Irony~\emph{et al.} apply their
''memory-communication tradeoff'' for matrix multiplication in the context
of parallel execution. It states that using $P$ nodes to perform the
multiplication of $M \times N$ and $N \times R$ matrices, at least one node must
send or receive at least $\frac{M N R}{2\sqrt{2}P\sqrt{S}}-S$ data.
The authors also present $2D$ and $3D$ task distributions for matrix
multiplication as parallel implementation in the two limit cases for
memory size: $S = \mathcal{O}\left(\frac{N^{2}}{P}\right)$ for the
first case, $S =
\mathcal{O}\left(\frac{N^{2}}{P^{\frac{2}{3}}}\right)$ for the
second. They prove that those algorithms are asymptotically optimal
regarding communications since they match the lower bound derived from
the ''memory-communication tradeoff''. More recent work by Solomonik
\emph{et al.}~\cite{10.1145/2897188} presents an original way of
modeling dependencies of any \emph{cDAG} as a lattice-hypergraph which
enables the authors to extend the ''memory-communication tradeoff'' to
take into account synchronization and express bounds about the
communication on the critical path.

In 2011, Solomonik and Demmel~\cite{solomonik2011lu25D} introduced the
$2.5D$ algorithms for matrix multiplication and LU factorization,
bringing a continuum between $2D$ and $3D$ algorithms.
Experimental results show the superiority of $2.5D$ algorithms over
conventional $2D$ algorithms.

Regarding Cholesky, Ballard \emph{et al.}~\cite{ballard2010communication}
reviewed existing parallel distributed algorithms and, based on their lower
bound on communication, proved that LAPACK and other block recursive
implementations are asymptotically optimal for a carefully selected block
size. The work on lower bounds by Kwasniewski\emph{et
  al.}~\cite{kwasniewski2021parallel2} leads to the design of parallel
distributed $2.5D$ LU (COnfLUX) and Cholesky (COnfCHOX)
algorithms. These algorithms perform a volume of communication per
node of $\frac{N^{3}}{P\sqrt{S}}+ \bigo{N^{2}}$.



\section{Assumptions and Methodology}
\label{sec:methodology}
We consider a computational platform with a slow memory of unbounded
size, and a fast memory of bounded size $S$. We fix a given
computation described with a computational directed acyclic graph cDAG
$G=(V, E)$, where each vertex in $V$ represents a computation
operation and each edge in $E$ represents a data dependency between
operations. An operation can only be performed if the corresponding
input data is in fast memory.  We assume that the algorithms
explicitly control which data is loaded and removed from the fast
memory. The operations in $V$ can be performed in different orders,
and we are interested in finding orderings which induce the minimum
amount of transfers between slow and fast memory, also called
\emph{I/O operations}.

\subsection{Lower bound methodology}

The lower bounds of this paper are based on a careful application of
Lemma~1 in~\cite{kwasniewski2021parallel2}, which states:
\begin{lemma}
  Fix a constant $X>S$ and assume that any subcomputation $H$ of a
  cDAG $G=(V, E)$ which reads at most $X$ elements and writes at most
  $X$ elements performs a number of operations $\card{H}$ bounded by
  $\card{H}\leqslant H_{\max}$.

  Consider any execution of $G$ with memory $S$. Its operational
  intensity $\rho$ is bounded by $\rho \leqslant \frac{H_{\max}}{X-S}$, and
  its number of I/O operations $Q$ is bounded by
  $$ Q \geqslant \frac{\card{V}}{\rho} \geqslant
  \frac{\card{V}(X-S)}{H_{\max}}.$$
  \label{lemma:kwasniewski}
\end{lemma}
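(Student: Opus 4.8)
The plan is to prove the statement with the classical phase‑partitioning technique for I/O lower bounds, specialized so that the hypothesis on $H_{\max}$ can be invoked on each phase. Fix an arbitrary valid execution of $G$ using a fast memory of size $S$, and let $Q$ denote its total number of I/O operations (loads and stores). I would cut the execution timeline into consecutive \emph{phases} by placing a boundary after every $X-S$ I/O operations, producing $m=\ceil{Q/(X-S)}$ phases $H_1,\dots,H_m$, each containing at most $X-S$ I/O operations in total. Since $X>S$ we have $X-S>0$, so this partition is well defined, and since the phases partition the timeline of computation vertices we have $\card{V}=\sum_{i=1}^m\card{H_i}$.

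The heart of the argument is to check that each phase $H_i$, viewed as a subcomputation of $G$, reads at most $X$ and writes at most $X$ distinct elements, so that the hypothesis yields $\card{H_i}\leqslant H_{\max}$. The values that $H_i$ consumes but does not itself produce must be available before use, hence are either resident in fast memory when the phase begins — at most $S$ of them, by the capacity constraint — or brought in by a load during the phase, of which there are at most the number of reads in $H_i$, and therefore at most $X-S$. Thus $H_i$ reads at most $S+(X-S)=X$ distinct elements. A symmetric count on the outputs of $H_i$ (values it produces that survive the phase: either left in fast memory at the phase's end, at most $S$, or committed to slow memory by a store, at most $X-S$) shows it writes at most $X$ distinct elements. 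This is precisely why I cut on \emph{total} I/O rather than on reads alone: a reads‑only partition would bound the read volume but leave the write volume uncontrolled, whereas bounding $(\text{reads}+\text{writes})\leqslant X-S$ controls both at once.

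With both volumes bounded by $X$, the hypothesis gives $\card{H_i}\leqslant H_{\max}$ for every $i$, so that $\card{V}=\sum_{i=1}^m\card{H_i}\leqslant m\,H_{\max}=\ceil{Q/(X-S)}H_{\max}$. Since in any window of $X-S$ consecutive I/O operations at most $H_{\max}$ computation vertices are executed, the global ratio is bounded as well, giving the operational‑intensity bound $\rho=\card{V}/Q\leqslant H_{\max}/(X-S)$. The I/O bound is then immediate: because $\rho=\card{V}/Q$ by definition, $Q=\card{V}/\rho\geqslant \card{V}(X-S)/H_{\max}$, which is the claimed inequality.

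The step I expect to require the most care is the read/write accounting of a phase: one must argue that the two sources of inputs (resident versus loaded) together cover every value the phase uses but does not internally produce, that the number of \emph{distinct} loaded elements is at most the number of load \emph{operations} — a value loaded, evicted and reloaded is counted once on the left but several times in $Q$, so the inequality goes the right way — and that the $S$‑bound on resident data is exactly the fast‑memory capacity at a phase boundary. The ceiling in $m=\ceil{Q/(X-S)}$ contributes only a lower‑order additive term, which is harmless here since the lemma is applied in the later sections to control the dominant $\frac{N^3}{\sqrt S}$ (respectively $\frac{N^2M}{\sqrt S}$) term, with the remaining contributions absorbed into the stated $\bigo{\cdot}$ error terms.
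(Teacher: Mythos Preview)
The paper does not prove this lemma: it is quoted verbatim as Lemma~1 of Kwasniewski \emph{et al.}~\cite{kwasniewski2021parallel2} and used as a black box, so there is no in-paper proof to compare against. Your argument is the standard Hong--Kung phase-partitioning proof that underlies that lemma, and it is correct: cutting the I/O trace into segments of $X-S$ operations and bounding inputs of each segment by $S$ (resident) plus $X-S$ (loaded), and outputs symmetrically, is exactly how one places each phase under the $H_{\max}$ hypothesis. Your remark that slicing on total I/O (rather than reads alone) is what simultaneously controls both the read and write volumes is the right observation.

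One minor point worth tightening: from $\card{V}\leqslant \ceil{Q/(X-S)}H_{\max}$ you only get $Q\geqslant \card{V}(X-S)/H_{\max}-(X-S)$, not the clean inequality stated in the lemma; equivalently, $\rho\leqslant H_{\max}/(X-S)$ holds only up to this additive slack coming from the last (possibly incomplete) phase. You already flag this, and you are right that it is harmless for the applications in Sections~\ref{subsec:bounds.syrk} and~\ref{subsec:chol.lower.bound}, where it is absorbed into the lower-order terms.
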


In~\cite{kwasniewski2021parallel2}, the number of elements read and
written by a subset of computations $H$ are expressed in terms of dominator
sets and minimum sets in the graph $G$. In our case however, the graph
is quite regular, so we do not need to introduce these notions.

\begin{algorithm}[t]
  \KwIn{$\mat{A}$ of size $N\times M$, $\mat{C}$ symmetric of size $N\times N$}
  \KwOut{$\mat{C} \pluseq \mat{A}\cdot \transpose{\mat{A}}$}
  \For{$i = 1$ to $N$}{
    \For{$j = 1$ to $i$}{
      \For{$k = 1$ to $M$}{
        $\mat{C}_{i,j} \pluseq \mat{A}_{i,k}\cdot \mat{A}_{j, k}$\;
      }
    }
  }
  \caption{Pseudo-code of SYRK, where only the lower triangular part
    of \mat{C} is referenced and computed.}
  \label{alg:syrk}
\end{algorithm}
\begin{algorithm}[t]
  \KwIn{$\mat{A}$ symmetric positive definite of size $N\times N$}
  \KwOut{Replace $\mat{A}$ with $\mat{L}$ such that $\mat{A} = {\mat{L}}\cdot \transpose{\mat{L}}$}
  \For{$k = 1$ to $N$}{
    $\mat{A}_{k, k} = \sqrt{\mat{A}_{k, k}}$\;
    \For{$i = k+1$ to $N$}{
      $\mat{A}_{i, k} = \mat{A}_{i, k} / \mat{A}_{k, k}$\;
      \For{$j = k+1$ to $i$}{
        $\mat{A}_{i, j} \minuseq \mat{A}_{i, k}\cdot \mat{A}_{j, k}$\comment*{update operations}
      }
    }
  }
  \caption{Pseudo-code of Cholesky, where only the lower triangular
    parts of \mat{A} and \mat{L} are referenced.}
  \label{alg:cholesky}
\end{algorithm}

We consider the SYRK and Cholesky kernels, as described in
Algorithms~\ref{alg:syrk} and~\ref{alg:cholesky}. In the following,
$N$ and $M$ always denote the sizes of the matrices used in the
kernels. In the Cholesky kernel, for the lower bound target we will
focus on the \emph{update operations} only.  We can thus describe each
operation by a triplet of positive integers $(i, j, k)$, and
for both cases we will further ignore the diagonal operations where $i
= j$. The sets of operations are denoted $\mathcal{S}$ for the SYRK
kernel and $\mathcal{C}$ for Cholesky, and are given by:
\begin{align*}
  \mathcal{S} &= \set{(i, j, k) \in \range[1]{N}^2\times \range[1]{M}}{i > j}\\
  \mathcal{C} &= \set{(i, j, k) \in \range[1]{N}^3}{i > j > k},
\end{align*}
where \range[a]{b} denotes the set of integers between $a$ and $b$
(inclusive).

We can see that for each statement of these algorithms, the set of
written variables is included in the set of read variables, so we only
focus on the input data of each operation. In the rest of the paper
$H$ is used to denote a set of operations, subset of $\mathcal{S}$ or
$\mathcal{C}$.
\begin{definition}
  Given a set $H$ of operations, \restr{H}{k} is the restriction of $H$
  to iteration $k$:
  $$\restr{H}{k} = \set{(i, j) \in \mathbb{N}^{2}}{(i, j, k) \in H}.$$
\end{definition}
\begin{definition}
  Given a subset $U$ of $\mathbb{N}^2$, $\footprint{U}$ is the
  \emph{symmetric footprint} of $U$:
  $$\footprint{U} = \set{i \in \mathbb{N}}{\exists j, (i, j) \in U
  \text{ or } (j, i) \in U}.$$
  If $i > j$ for all $(i, j)\in U$, then $\card{U} \leqslant
  \frac{\card{\footprint{U}}(\card{\footprint{U}}-1)}{2}$. In particular, this holds for
  any $\restr{H}{k}$.
\end{definition}

With these definitions, we can express the number of data accessed by
a set $H$: using the SYRK kernel as an example, $\bigcup_k
\restr{H}{k}$ is the set of elements $C_{i,j}$ accessed by $H$, and
for any $k$, $\footprint{\restr{H}{k}}$ is the set of $A_{i, k}$ elements
accessed by $H$.

\begin{proposition}
  For any set $H$ of operations, the number of data accessed by $H$ is
  $$D(H) = \card{\cup_k \restr{H}{k}} + \sum_k \card{\footprint{\restr{H}{k}}}.$$
\end{proposition}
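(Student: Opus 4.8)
The plan is to count the distinct data elements touched by an arbitrary set $H$ of operations by separating them into the two matrices involved. Each operation in $H$ is a triplet $(i,j,k)$, and in the SYRK case it reads the output entry $\mat{C}_{i,j}$ together with the two input entries $\mat{A}_{i,k}$ and $\mat{A}_{j,k}$. So I would partition the set of accessed data into $\mat{C}$-type accesses and $\mat{A}$-type accesses, count each separately, and add the two counts. The key observation already supplied in the excerpt is the identification of each family of accesses with a combinatorial object: the $\mat{C}$ entries correspond to the index pairs $(i,j)$ appearing in $H$ (irrespective of $k$), and for a fixed $k$ the $\mat{A}$ entries of the form $A_{\cdot,k}$ correspond exactly to the row-indices appearing in the slice $\restr{H}{k}$.

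First I would handle the $\mat{C}$-accesses. A distinct element $\mat{C}_{i,j}$ is accessed as soon as some triplet $(i,j,k)$ lies in $H$ for at least one value of $k$; the particular $k$ is irrelevant since all operations with the same $(i,j)$ touch the same output entry. Thus the set of distinct $\mat{C}$ entries accessed is exactly $\set{(i,j)}{\exists k,\ (i,j,k)\in H} = \bigcup_k \restr{H}{k}$, giving a contribution of $\card{\bigcup_k \restr{H}{k}}$. Next I would handle the $\mat{A}$-accesses. Here I would stress that the count must be done $k$ by $k$, because the entries $A_{i,k}$ for different columns $k$ are genuinely different data elements and must not be conflated. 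For a fixed $k$, an operation $(i,j,k)\in H$ reads $A_{i,k}$ and $A_{j,k}$, so the set of distinct column-$k$ entries of $\mat{A}$ that $H$ touches is precisely the set of indices appearing as a first or second coordinate in $\restr{H}{k}$, which is the symmetric footprint $\footprint{\restr{H}{k}}$ by definition. Summing over all $k$ yields $\sum_k \card{\footprint{\restr{H}{k}}}$.

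Adding the two contributions gives the claimed formula $D(H) = \card{\bigcup_k \restr{H}{k}} + \sum_k \card{\footprint{\restr{H}{k}}}$. The only subtlety worth checking is that there is no double-counting \emph{across} the two families: a $\mat{C}$ entry and an $\mat{A}$ entry are always distinct data items (they live in different matrices), so the union of the two accessed sets is disjoint and their cardinalities simply add. Within each family the counts are already deduplicated by construction, since I am counting the cardinality of a set. The same argument applies verbatim to the Cholesky update operations, where $\mat{A}_{i,j}$ plays the role of the output entry and $\mat{A}_{i,k}, \mat{A}_{j,k}$ play the role of the inputs; the index constraint $i>j>k$ only shrinks the relevant slices and does not affect the counting identity.

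I expect no serious obstacle here, as this is essentially a bookkeeping statement that unpacks the two definitions just introduced. If anything requires care, it is making explicit the disjointness between output-entry accesses and input-entry accesses, and ensuring that the per-$k$ footprint is the right object for the inputs (i.e.\ that both $A_{i,k}$ and $A_{j,k}$ are captured, which is exactly why the \emph{symmetric} footprint, taking either coordinate, is used rather than a one-sided projection).
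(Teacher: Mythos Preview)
Your argument for the SYRK case is correct and coincides with the brief justification the paper gives immediately before the proposition (the paper offers no further proof): $\bigcup_k\restr{H}{k}$ is the set of $\mat{C}$-entries touched, $\footprint{\restr{H}{k}}$ is the set of row indices whose column-$k$ entries of $\mat{A}$ are touched, and the two families are disjoint because $\mat{C}$ and $\mat{A}$ are distinct matrices.

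The gap is in your last paragraph. For Cholesky the output $\mat{A}_{i,j}$ and the inputs $\mat{A}_{i,k},\mat{A}_{j,k}$ all live in the \emph{same} matrix, so the disjointness you invoke fails. Concretely, take $H=\{(3,2,1),(4,3,2)\}\subseteq\mathcal{C}$: the element $\mat{A}_{3,2}$ is the output of the first triple and an input of the second, so only $5$ distinct data are accessed, whereas $D(H)=2+(2+2)=6$. Hence the proposition, read as an exact equality, does \emph{not} hold for arbitrary $H\subseteq\mathcal{C}$; at best one gets the inequality $D(H)\geqslant\text{(actual data accessed)}$. Your claim that ``the same argument applies verbatim'' is therefore incorrect. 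The paper itself does not address this point and only ever uses the formula in its SYRK form (the Cholesky lower bound is derived by relaxing to a SYRK instance), so your SYRK argument is what is really needed; but the Cholesky extension as you wrote it should be dropped or weakened to an inequality.
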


\subsection{Triangle blocks}

Many results in this paper are obtained by considering \emph{triangle
  blocks}, which are generalizations of the diagonal tiles in a tile
decomposition of a symmetric matrix. In particular, the SYRK lower
bound shows that accessing the result matrix by triangle blocks is the
most efficient, and the TBS algorithm describes how to partition the
result matrix in disjoint triangle blocks. Figure~\ref{fig:tbs.zones}
(page~\pageref{fig:tbs.zones}) depicts examples of triangle blocks.

\begin{definition}[Triangle block]
  \label{def:triangle.block}
  Given a set $R$ of integer indices, the \emph{triangle block}
  $\TB(R)$ is the set of all subdiagonal pairs of elements of $R$:
  \begin{align*}
    \TB(R) &= \set{(r, r')}{r, r' \in R \text{ and } r > r'}
  \end{align*}
\end{definition}
It is clear that $\card{\TB(R)} = \frac{\card{R}(\card{R}-1)}{2}$. We
say that $\TB(R)$ has side length $\card{R}$.

For any $m\in \mathbb{N}$, we define $\sigma(m)$ as the smallest
possible side length of a triangle block with at least $m$ elements.
$\sigma(m)$ is thus the smallest element of $\mathbb{N}$ such that $m
\leqslant \frac{\sigma(m)(\sigma(m)-1)}{2}$. By solving the quadratic
equation, we get:

\begin{lemma}
  For $m\in \mathbb{N}^*$, $\sigma(m)= \lceil
  \sqrt{\frac{1}{4}+2m}+\frac{1}{2} \rceil$, and $\sigma(0) = 0$.
\end{lemma}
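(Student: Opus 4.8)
The plan is to solve directly the quadratic inequality that defines $\sigma(m)$. By definition $\sigma(m)$ is the smallest $s\in\mathbb{N}$ satisfying $m \leqslant \frac{s(s-1)}{2}$, which I rewrite as $s^2 - s - 2m \geqslant 0$. So the whole statement reduces to locating the smallest nonnegative integer root region of the polynomial $f(s) = s^2 - s - 2m$.

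First I would analyze $f$. Its real roots are $s_\pm = \frac{1 \pm \sqrt{1+8m}}{2}$, and since $f$ is a parabola opening upward, $f(s)\geqslant 0$ holds exactly when $s \leqslant s_-$ or $s \geqslant s_+$. For $m\geqslant 0$ the smaller root satisfies $s_- = \frac{1-\sqrt{1+8m}}{2}\leqslant 0$, so among nonnegative integers the condition $f(s)\geqslant 0$ is equivalent to the single inequality $s \geqslant s_+$. This is the only point requiring a little care: I must note that $s_-\leqslant 0$ rules out spurious small solutions and that $f$ is nondecreasing on $[s_+,\infty)$, so that the smallest admissible integer is obtained simply by rounding $s_+$ up.

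Next I would rewrite the larger root in the form appearing in the statement, $s_+ = \frac{1+\sqrt{1+8m}}{2} = \frac{1}{2} + \sqrt{\frac{1+8m}{4}} = \frac{1}{2} + \sqrt{\frac{1}{4}+2m}$. Since $\sigma(m)$ is by definition the smallest nonnegative integer $s$ with $s \geqslant s_+$, it equals $\lceil s_+\rceil = \lceil \sqrt{\frac{1}{4}+2m} + \frac{1}{2}\rceil$, which is precisely the claimed formula for $m\in\mathbb{N}^*$.

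Finally, the case $m=0$ must be treated separately, since the formula would give $\lceil \sqrt{1/4}+1/2\rceil = \lceil 1\rceil = 1$, whereas $s=0$ already satisfies $0 \leqslant \frac{0\cdot(-1)}{2} = 0$, whence $\sigma(0)=0$. I do not anticipate any genuine obstacle: the argument is an elementary sign analysis of $f$, and the only subtlety is justifying that $\lceil s_+\rceil$ returns the smallest integer solution, which follows directly from the monotonicity of $f$ on $[s_+,\infty)$ together with $s_-\leqslant 0$.
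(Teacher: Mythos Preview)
Your proposal is correct and follows exactly the approach the paper indicates: the paper merely states ``By solving the quadratic equation, we get'' before the lemma, and your argument is precisely that computation carried out in full. The only superfluous remark is the appeal to monotonicity of $f$ on $[s_+,\infty)$; once you have shown that for $m\geqslant 1$ the condition $f(s)\geqslant 0$ on nonnegative integers reduces to $s\geqslant s_+$, the identity $\sigma(m)=\lceil s_+\rceil$ is immediate from the definition of the ceiling.
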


For any $m \in \mathbb{N}$, we define $T(m)$ as any size-$m$ subset of
$\TB(\range[1]{\sigma(m)})$. We use $T(m)$ as a canonical way of
performing $m$ computations in an iteration, while minimizing the
number of data accesses. Indeed, by definition $\card{T(m)} = m$, and
it is easy to see that $\card{\footprint{T(m})} = \sigma(m)$.

\section{Lower Bounds}
\label{sec:bounds}

\subsection{Symmetric Multiplication (SYRK)}
\label{subsec:bounds.syrk}

As mentioned above, in order to obtain a lower bound on the data
movements required for the SYRK computation, we first provide an upper
bound on the largest subcomputation $H$ than can be performed while
accessing at most $X$ data elements. We are thus looking for (a bound on) the
optimal value of the following optimization problem:
\begin{align*}
  \mathcal{P}(X)\text{:} \quad& \max \; \card{H}\\
   \textrm{s.t.} \quad& 
   D(H) = \card{\cup_k \restr{H}{k}}+\sum_{k = 1}^{M} \card{\footprint{\restr{H}{k}}} \leqslant X \\
   & H \subseteq \mathcal{S}
\end{align*}

The main result of this section can be stated as:

\begin{theorem}
  The optimal value of $\mathcal{P}(X)$ is at most
  $\frac{\sqrt{2}}{3\sqrt{3}}X^{\frac{3}{2}}$.
  \label{thm:syrk.main.bound}
\end{theorem}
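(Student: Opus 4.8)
The plan is to reduce the optimization over the combinatorial object $H$ to a one–variable calculus problem, by extracting from $H$ only the per-iteration aggregate quantities that enter the objective and the constraint. First I would slice $H$ by iteration: for each $k$ set $m_k = \card{\restr{H}{k}}$ and $s_k = \card{\footprint{\restr{H}{k}}}$, and let $U = \cup_k \restr{H}{k}$ with $u = \card{U}$. With this notation the feasibility constraint reads $D(H) = u + \sum_k s_k \leq X$, and the objective to be maximized is $\card{H} = \sum_k m_k$.

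The heart of the argument is to bound each $m_k$ using two independent facts. On one hand, since $\restr{H}{k} \subseteq U$ we have $m_k \leq u$; this captures the reuse of the $C_{i,j}$ elements across the iterations. On the other hand, every pair in $\restr{H}{k}$ is subdiagonal, so the footprint inequality stated with the definition of $\footprint{\cdot}$ gives $m_k \leq \frac{s_k(s_k-1)}{2} \leq \frac{s_k^2}{2}$; this captures the footprint cost of the accessed column of $\mat{A}$. Combining the two bounds through a geometric mean, $m_k^2 \leq u\cdot \frac{s_k^2}{2}$, hence $m_k \leq \frac{\sqrt{u}}{\sqrt{2}}\,s_k$. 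Summing over $k$ and using $\sum_k s_k \leq X - u$ then yields $\card{H} \leq \frac{\sqrt{u}}{\sqrt{2}}\sum_k s_k \leq \frac{\sqrt{u}}{\sqrt{2}}(X - u)$.

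It then remains to maximize $g(u) = \sqrt{u}\,(X-u)$ over $u \in [0, X]$. A short derivative computation shows the maximum is attained at $u = X/3$, where $g(X/3) = \frac{2}{3\sqrt{3}}X^{3/2}$, so that $\card{H} \leq \frac{1}{\sqrt{2}}\cdot\frac{2}{3\sqrt{3}}X^{3/2} = \frac{\sqrt{2}}{3\sqrt{3}}X^{3/2}$, as claimed. I expect the combination step to be the crux: the two naive estimates $\sum_k m_k \leq (\#\text{iterations})\cdot u$ and $\sum_k m_k \leq \frac{1}{2}\sum_k s_k^2$ are each too weak in isolation (the first ignores the footprint cost entirely, the second ignores reuse), and it is precisely the geometric-mean interpolation between them that produces both the correct $X^{3/2}$ scaling and the sharp constant. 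Since the whole derivation only produces an upper estimate, no consistency of the optimal $u = X/3$ with the per-iteration constraints needs to be verified for the bound itself; tightness is instead confirmed separately by the matching construction of a single triangle block replicated across many iterations.
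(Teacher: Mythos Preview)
Your proof is correct and takes a genuinely different, more elementary route than the paper. The paper first argues that $\mathcal{P}(X)$ admits a \emph{balanced} optimal solution (all iterations use the same triangle block $T(m)$, possibly with one smaller last iteration), parametrizes such solutions by integers $I,J,K$, relaxes to a continuous problem, eliminates $J$, and finally solves the two-variable problem $\max K\frac{I(I-1)}{2}$ subject to $\frac{I(I-1)}{2}+KI\leq X$ via KKT conditions. You bypass all of this structure by working directly with the aggregate quantities $u=\card{\cup_k\restr{H}{k}}$ and $s_k=\card{\footprint{\restr{H}{k}}}$: the geometric-mean step $m_k\leq\sqrt{u}\cdot s_k/\sqrt{2}$ collapses the two-sided control (reuse of $C$ entries, footprint of $A$ entries) into a single linear bound in $s_k$, after which the constraint sums cleanly and only the scalar optimization of $\sqrt{u}\,(X-u)$ remains.

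What each approach buys: yours is shorter and avoids both the balanced-solution lemma and the Lagrangian analysis. The paper's longer route, however, explicitly exhibits the extremal shape (a single triangle block $T(m)$ repeated over many $k$), and this structural identification is exactly what motivates the \TBS algorithm in Section~\ref{subsec:tbs}. Your argument proves the same bound but does not by itself reveal that triangle blocks are the right tiling; you acknowledge this by deferring tightness to the separate construction.
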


To prove this theorem, we first show that $\mathcal{P}(X)$ admits
triangle-shaped optimal solutions, which we call \emph{balanced
  solutions}. 
We then compute an upper bound on the size of such a
balanced solution.

\subsubsection{Balanced Solutions}

\begin{definition}
  For given $x$ and $m$, we define the \emph{balanced solution} $B = B(x, m)$ by:
  \begin{displaymath}
    \begin{cases}
       \restr{B}{k}  = T(m) & \text{for all }k \in \range{K-1},\\
       \restr{B}{K} = T(m')& \text{for } k = K,\\
       \restr{B}k = \emptyset & \text{for all }k > K, \\
    \end{cases}
  \end{displaymath}
  where $K = \lfloor \frac{x}{m}\rfloor$ and $m' = x - Km < m$.
\end{definition}

It is clear that $\card{B(x, m)} = x$ (since $K\cdot m + (x-Km) = x$)
and $\card{\cup_k \restr{B(x, m)}{k}} = m$. The next lemma shows that
any solution $H$ can be turned into a balanced solution with lower
cost.

\begin{lemma}
  If $H$ is a solution of $\mathcal{P}(X)$, let the corresponding
  balanced solution be $B = B\left(\card{H}, \max_k
  \card{\restr{H}{k}}\right)$. Then $D(B)\leqslant D(H)$.
\end{lemma}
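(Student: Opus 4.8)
The plan is to bound the two contributions to $D$ separately, following the decomposition $D(H) = \card{\cup_k \restr{H}{k}} + \sum_k \card{\footprint{\restr{H}{k}}}$. Write $n_k = \card{\restr{H}{k}}$ and $m = \max_k n_k$, so that $\card{H} = \sum_k n_k$ and, by construction of the balanced solution, $\card{B} = \card{H}$ and $\card{\cup_k \restr{B}{k}} = m$. For the first term I would simply observe that the union $\cup_k \restr{H}{k}$ contains each individual $\restr{H}{k}$, hence $\card{\cup_k \restr{H}{k}} \geqslant \max_k n_k = m = \card{\cup_k \restr{B}{k}}$. This disposes of the union term and reduces the lemma to the inequality between footprint sums, $\sum_k \card{\footprint{\restr{B}{k}}} \leqslant \sum_k \card{\footprint{\restr{H}{k}}}$.

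For the footprint sum, the next step is to discard the internal geometry of each $\restr{H}{k}$. Since every pair in $\restr{H}{k}$ is subdiagonal, the inequality $\card{\restr{H}{k}} \leqslant \frac{\card{\footprint{\restr{H}{k}}}(\card{\footprint{\restr{H}{k}}}-1)}{2}$ recorded earlier gives $\card{\footprint{\restr{H}{k}}} \geqslant \sigma(n_k)$. On the balanced side, each of the $K$ full iterations contributes exactly $\sigma(m)$ and the last one contributes $\sigma(m')$, so $\sum_k \card{\footprint{\restr{B}{k}}} = K\,\sigma(m) + \sigma(m')$. It therefore suffices to prove the purely numerical inequality
$$\sum_k \sigma(n_k) \geqslant K\,\sigma(m) + \sigma(m'),$$
subject to $\sum_k n_k = \card{H}$, $0 \leqslant n_k \leqslant m$, and $n_k = m$ for at least one index. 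In words: among all ways of splitting the fixed total $\card{H}$ into parts not exceeding $m$, the balanced split $(m,\ldots,m,m')$ minimizes the footprint sum.

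I would establish this by an exchange argument that moves the configuration $(n_k)$ toward $(m,\ldots,m,m')$ while controlling $\sum_k \sigma(n_k)$. The tool that makes this work is the subadditivity of $\sigma$: from $n \leqslant \frac{\sigma(n)(\sigma(n)-1)}{2}$ together with $\frac{a(a-1)}{2}+\frac{b(b-1)}{2} \leqslant \frac{(a+b)(a+b-1)}{2}$ one obtains $\sigma(p+q) \leqslant \sigma(p) + \sigma(q)$ for all $p,q$. Thus merging two iterations whose counts sum to at most $m$ never increases the footprint sum, and mass can be pushed into the iteration that already realizes the maximum $m$.

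The delicate point — and the step I expect to be the real obstacle — is that $\sigma$ is the ceiling of a concave function and is \emph{not} itself concave: its increments are not monotone, so a naive single-element transfer between two bins can \emph{increase} $\sum_k \sigma(n_k)$. Consequently the exchange cannot be carried out element by element; it must move mass in chunks aligned with the triangular thresholds, and it must exploit the hypothesis $m = \max_k n_k$ (so that no bin larger than an existing one is ever created). Making this chunking precise, and verifying that the terminal configuration of the exchange is exactly $B(\card{H}, m)$ rather than some other maximally merged configuration, is where the careful bookkeeping lies; one may also need to invoke that $H$ is an \emph{optimal} solution of $\mathcal{P}(X)$ to rule out the configurations that would otherwise defeat the balanced split. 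Everything else is routine.
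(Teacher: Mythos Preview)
Your decomposition mirrors the paper's exactly: bound the union term by $m$, then reduce to the footprint inequality $K\sigma(m)+\sigma(m')\leqslant\sum_k\sigma(n_k)$. The paper dispatches this in one line by asserting that $\sigma$ is concave; you are right to be suspicious, since $\sigma$ is the ceiling of a concave function and is \emph{not} concave on the integers (its forward differences run $2,1,0,1,0,0,1,\ldots$). So the paper glosses over precisely the step at which you pause.

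Your exchange argument cannot close that gap, however, because the target inequality is simply false. Take $(n_1,n_2,n_3)=(4,3,3)$: then $m=4$, $K=2$, $m'=2$, and
\[
\sigma(4)+\sigma(3)+\sigma(3)=4+3+3=10 \;<\; 2\sigma(4)+\sigma(2)=8+3=11.
\]
This lifts to a counterexample to the lemma itself. With $\restr{H}{1}=\TB(\{1,2,3\})\cup\{(4,1)\}$ and $\restr{H}{2}=\restr{H}{3}=\TB(\{1,2,3\})$ one has $\card{H}=10$, $\max_k\card{\restr{H}{k}}=4$, $\card{\cup_k\restr{H}{k}}=4$, and $\sum_k\card{\footprint{\restr{H}{k}}}=4+3+3=10$, so $D(H)=14$; yet $D\bigl(B(10,4)\bigr)=4+(4+4+3)=15>D(H)$. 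No chunked exchange can rescue a false inequality.

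Your closing instinct --- that one may need $H$ to be \emph{optimal} --- is the right diagnosis. The $H$ above is not optimal for $X=14$ (for instance $B(11,6)$ has $D=6+4+4=14$ and $\card{B(11,6)}=11>10$), so the corollary the paper actually uses downstream, that \emph{some} balanced solution is optimal, may well survive. But that requires a different argument than the lemma as written; the lemma itself, for arbitrary feasible $H$ with the prescribed choice of $m$, is false.
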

\begin{proof}
  Given a solution $H$, let us define $m_k = \card{\restr{H}{k}}$ and
  denote $m = \max_k m_k$. As mentioned above, we have $\card{B} =
  \card{H}$ and $\card{\cup_k \restr{B}{k}} = m = \max_k m_k \leqslant
  \card{\cup_k \restr{H}{k}}$. Furthermore, since $\sum_k m_k =
  \card{H} = K\cdot m + m'$ and since the $\sigma(\cdot)$ function is
  concave, we have:

  \begin{align*}
    \sum_{k}\card{\footprint{\restr{B}{k}}}& = K\sigma(m)+\sigma(m') \\
    & \leqslant \sum_{k}\sigma(m_{k}) \\
    & = \sum_{k}\card{\footprint{\restr{H}{k}}}
  \end{align*}

  This shows that $D(B) \leqslant D(H)$.
\end{proof}

In particular, if $H$ is an optimal solution, we obtain the following
corollary.
\begin{corollary}
  There exist $x$ and $m$ such that $B(x, m)$ is an optimal solution
  to $\mathcal{P}(X)$.
  \label{lem:balanced.is.optimal}
\end{corollary}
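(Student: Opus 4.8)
The plan is to obtain this corollary directly from the preceding lemma, applied to an \emph{optimal} solution of $\mathcal{P}(X)$. First I would argue that $\mathcal{P}(X)$ actually admits an optimal solution at all: its feasible region is a finite collection of subsets of $\mathcal{S}$, and the objective $\card{H}$ is integer-valued and bounded above (by $\card{\mathcal{S}}$), so the maximum is attained by some $H^{\star}$. I will write $v^{\star} = \card{H^{\star}}$ for this optimal value.

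Next I would set $x = \card{H^{\star}}$ and $m = \max_k \card{\restr{H^{\star}}{k}}$ and form the balanced solution $B = B(x, m)$. The previous lemma gives $D(B) \leqslant D(H^{\star})$, and feasibility of $H^{\star}$ gives $D(H^{\star}) \leqslant X$, so $D(B) \leqslant X$. Together with $\card{B} = x = v^{\star}$ (noted immediately after the definition of the balanced solution), this shows that $B$ attains the optimal objective value while respecting the data-volume constraint; hence $B$ is itself optimal, which is exactly the claim with these choices of $x$ and $m$.

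The only point requiring care, and the main (minor) obstacle, is checking that $B$ is a \emph{genuine} feasible point, i.e. that $B \subseteq \mathcal{S}$, so that it is eligible to be declared optimal rather than merely a set with small footprint. This reduces to two range checks. For the iteration index, $x = \sum_k \card{\restr{H^{\star}}{k}} \leqslant M\,m$, so $K = \floor{x/m} \leqslant M$ and $B$ only uses iterations in $\range[1]{M}$. For the row/column indices, $m = \card{\restr{H^{\star}}{k_0}}$ for some $k_0$, and since every pair in $\restr{H^{\star}}{k_0}$ is subdiagonal with indices in $\range[1]{N}$, the footprint bound yields $m \leqslant \frac{N(N-1)}{2}$, whence $\sigma(m) \leqslant N$ and every triangle block $T(m)$, $T(m')$ used by $B$ lies within $\range[1]{N}$. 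Thus $B \subseteq \mathcal{S}$, and the argument is complete.
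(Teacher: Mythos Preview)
Your argument is correct and follows exactly the paper's approach: apply the preceding lemma to an optimal $H^{\star}$ and observe that the resulting $B(\card{H^{\star}}, \max_k \card{\restr{H^{\star}}{k}})$ is feasible with the same objective value. You are in fact more careful than the paper, which leaves the feasibility check $B \subseteq \mathcal{S}$ implicit; your verification of the iteration-range and index-range bounds is a genuine (if minor) addition of rigor.
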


\subsubsection{Optimal Balanced Solution}

A balanced solution $B$ can be described with three integer values $I$, $J$
in \range[1]{N} with $J\leqslant I$, and $K\in \range[1]{M}$ such that

\begin{displaymath}
  \left\lbrace \begin{array}{l}
    \forall k \in \range{K-1}, \; B_{k} = T(I) \\
    B_{K} = T(J)
  \end{array} \right.
\end{displaymath}

Such a solution satisfies $\card{B} = K\frac{I(I-1)}{2} +
\frac{J(J-1)}{2}$ and $D(B) = \frac{I(I-1)}{2}+KI+J$. By relaxing
integrity constraints and upper bounds on $I, J, K$, we get that the
optimal size of a balanced solution is at most the optimal value of
the following problem $\mathcal{P}'(X)$:

\begin{align*}
  \mathcal{P}'(X)\text{:}\quad &\max \left( K\frac{I(I-1)}{2} + \frac{J(J-1)}{2}\right)\\
  & \textrm{s.t.} \quad
  \left\lbrace \begin{array}{l}
    \frac{I(I-1)}{2}+KI+J \leqslant X \\
    J \leqslant I \\
  \end{array} \right.
\end{align*}

\begin{lemma}
  For any $(I, J, K)$ solution to $\mathcal{P}'(X)$, define $K' = K +
  \frac{J(J-1)}{I(I-1)}$. Then $(I, 0, K')$ is a solution to
  $\mathcal{P}'(X)$ with the same value.
  \label{lem:j.equals.zero}
\end{lemma}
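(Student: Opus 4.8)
The plan is to verify directly that the candidate triple $(I, 0, K')$ attains the same objective value as $(I, J, K)$ and is still feasible; the definition of $K'$ is reverse-engineered to preserve the value, and the original second constraint $J \leqslant I$ is precisely what rescues feasibility. I would first note that the only nontrivial regime is $I > 1$, since when $I \leqslant 1$ the constraint $J \leqslant I$ forces $\frac{J(J-1)}{2} = 0$ and there is nothing to relocate; this also ensures $I(I-1) \neq 0$ so that $K'$ is well defined.

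For the objective, I would substitute $K' = K + \frac{J(J-1)}{I(I-1)}$ into $K'\frac{I(I-1)}{2}$. The factor $\frac{I(I-1)}{2}$ cancels the denominator $I(I-1)$ in the added term, so the value of $(I, 0, K')$ equals $K\frac{I(I-1)}{2} + \frac{J(J-1)}{2}$, which is exactly the value of $(I, J, K)$. This is a one-line computation and is the easy half of the argument. For feasibility, the constraint $J \leqslant I$ becomes $0 \leqslant I$, which is immediate, so the only substantive check is the footprint constraint $\frac{I(I-1)}{2} + K'I \leqslant X$. Expanding $K'I = KI + \frac{J(J-1)}{I-1}$, the new left-hand side is $\frac{I(I-1)}{2} + KI + \frac{J(J-1)}{I-1}$, to be compared against the known feasible quantity $\frac{I(I-1)}{2} + KI + J \leqslant X$.

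The crux — more a point to recognize than a genuine obstacle — is that the mass $\frac{J(J-1)}{I-1}$ added to the $K'I$ term is dominated by the freed-up slack $J$. Indeed, their difference is $\frac{J(J-1)}{I-1} - J = J\cdot\frac{J-I}{I-1}$, which is nonpositive exactly because $J \geqslant 0$, $I > 1$, and $J \leqslant I$. Hence the new left-hand side does not exceed the old one, so it stays $\leqslant X$ and the constraint survives. In effect the transformation moves the $\frac{J(J-1)}{2}$ part of the objective into $K'$ at zero net value, paying for it with a strictly smaller increase in data footprint, which is why $J \leqslant I$ is precisely the hypothesis that makes the trade feasible. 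I would close by recording the degenerate cases $J = 0$ and $I \leqslant 1$ to make the statement fully rigorous.
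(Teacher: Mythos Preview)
Your proof is correct and follows essentially the same approach as the paper: verify directly that $(I,0,K')$ has the same objective value by substitution, and check feasibility by showing $\frac{J(J-1)}{I-1}\leqslant J$ via the constraint $J\leqslant I$. You add a bit more care about degenerate cases ($I\leqslant 1$, $J=0$) that the paper leaves implicit, but the argument is otherwise identical.
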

\begin{proof}
  The solution $(I, 0, K')$ is feasible:
  \begin{align*}
    \frac{I(I-1)}{2}+K'I &= \frac{I(I-1)}{2}+KI+J\frac{J-1}{I-1}\\
    &\leqslant \frac{I(I-1)}{2}+KI+J &\text{since $J \leqslant I$}\\
    & \leqslant X & \text{since $(I, J, K)$ is feasible}
  \end{align*}

  Furthermore, its objective value is $K'\frac{I(I-1)}{2} =
  K\frac{I(I-1)}{2}+\frac{J(J-1)}{2}$, which is the objective value of
  $(I, J, K)$.
\end{proof}

This lemma shows that the optimum value of $\mathcal{P}'$ is equal to
the optimum value of the simpler $\mathcal{P}''$ problem below:

\begin{align*}
 \mathcal{P}''(X)\text{:}\quad & \max \left( K\frac{I(I-1)}{2} \right)\\
 \textrm{s.t.} \quad &\frac{I(I-1)}{2}+KI \leqslant X \\
\end{align*}

This problem is now simple enough and we can provide a direct bound on
its optimum value.

\begin{lemma}
  The optimum value of $\mathcal{P}''(X)$ is at most
  $\frac{\sqrt{2}}{3\sqrt{3}}X^{\frac{3}{2}}$.
  \label{lem:final.optim}
\end{lemma}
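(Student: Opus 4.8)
The plan is to reduce this constrained optimization to a single application of the arithmetic–geometric mean (AM–GM) inequality, after a preliminary reduction. First I would note that since the objective $K\frac{I(I-1)}{2}$ is increasing in $K$ for fixed $I$, at any optimum the constraint $\frac{I(I-1)}{2}+KI\leq X$ is tight; more to the point, it suffices to bound the objective at every feasible point. I may assume $I\geq 1$ and $K\geq 0$, since otherwise (e.g. $I=1$ or $K=0$) the objective vanishes and the bound is trivial. In particular all the quantities appearing below are nonnegative, which is what I need to invoke AM–GM.

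The key idea is to break the linear term $KI$ into two equal halves and apply AM–GM to the three nonnegative quantities $T_1=\frac{I(I-1)}{2}$, $T_2=T_3=\frac{KI}{2}$. Their sum is exactly the left-hand side of the constraint, so $T_1+T_2+T_3\leq X$. AM–GM then gives $\left(\frac{X}{3}\right)^3\geq\left(\frac{T_1+T_2+T_3}{3}\right)^3\geq T_1T_2T_3=\frac{K^2 I^3(I-1)}{8}$, which rearranges to $K^2 I^3(I-1)\leq\frac{8}{27}X^3$. It remains to relate the objective to this product: writing $\Phi=K\frac{I(I-1)}{2}$, I compute $\Phi^2=\frac{K^2 I^2(I-1)^2}{4}=\frac{I-1}{I}\cdot\frac{K^2 I^3(I-1)}{4}\leq\frac{I-1}{I}\cdot\frac{2}{27}X^3\leq\frac{2}{27}X^3$, where the last step uses $\frac{I-1}{I}\leq 1$ (valid since $I\geq 1$). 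Taking square roots yields $\Phi\leq\sqrt{\frac{2}{27}}\,X^{3/2}=\frac{\sqrt 2}{3\sqrt 3}X^{3/2}$, which is exactly the claimed bound.

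The only step that requires genuine foresight — and the one I would expect to be the main obstacle — is choosing \emph{how} to split the constraint so that AM–GM is tight at the true optimizer. The even split of $KI$ is precisely what makes the AM–GM equality case $\frac{I(I-1)}{2}=\frac{KI}{2}$, i.e. $\frac{I(I-1)}{2}=\frac{X}{3}$, coincide with the maximizer one finds by eliminating $K$ and differentiating (where the optimal number of accessed $\mat{C}$-elements is $\frac{X}{3}$); any other split would give a looser constant. The residual mismatch between $I(I-1)$ in the objective and the $I^2$ implicitly appearing in the footprint term $KI$ is absorbed harmlessly by the factor $\frac{I-1}{I}\leq 1$, so no further work is needed and the constant comes out sharp at $\frac{\sqrt 2}{3\sqrt 3}$.
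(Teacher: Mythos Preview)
Your proof is correct and takes a genuinely different route from the paper. The paper proceeds by writing down the Karush--Kuhn--Tucker conditions for $\mathcal{P}''(X)$, solving explicitly for the critical point $I^*=\tfrac{2}{3}+\tfrac{\sqrt{1+6X}}{3}$, $K^*I^*=(I^*-1)(I^*-\tfrac12)$, and then verifying by a derivative computation that the resulting value $\mathcal{H}''(X)$ never exceeds $\frac{\sqrt{2}}{3\sqrt{3}}X^{3/2}$. Your argument bypasses all of this: the split $X\geq \frac{I(I-1)}{2}+\frac{KI}{2}+\frac{KI}{2}$ followed by AM--GM gives the product bound $K^2I^3(I-1)\leq\frac{8}{27}X^3$ in one line, and the factor $\frac{I-1}{I}\leq 1$ closes the gap to the objective. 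This is shorter, avoids calculus entirely, and makes the constant $\frac{\sqrt{2}}{3\sqrt{3}}$ appear transparently. What the paper's approach buys is the exact optimizer and the exact optimal value (including lower-order terms), which your inequality-based argument does not produce; but since the lemma only asks for the upper bound, nothing is lost.
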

\begin{proof}

  Reformulated as a minimization problem, $\mathcal{P}''(X)$ becomes:
  \begin{align*}
    & \min \left( f(K, I) = -K\frac{I(I-1)}{2} \right) \\
    \textrm{s.t.} \quad &
    g(K, I) = \frac{I(I-1)}{2}+KI -X \leqslant 0 \\
  \end{align*}

  Since the regularity conditions are met over the whole definition
  space of variables $I$ and $K$, we can write Karush-Kuhn-Tucker
  necessary conditions: if $(K, I)$ is a local optimum for
  $\mathcal{P}''(X)$ then
  \begin{align*}
    \exists u \geqslant 0, \quad & \nabla f(K, I)+u \nabla g(K, I) = 0 \\ 
   \Leftrightarrow \exists u \geqslant 0, \quad &
  \left\lbrace \begin{array}{l}
    -K(I-\frac{1}{2})+u(I-\frac{1}{2}+K) = 0 \\
    -\frac{I(I-1)}{2}+uI = 0
  \end{array} \right.
\end{align*}

which implies $u = \frac{I-1}{2}$, and then $KI = (I-1)(I-\frac{1}{2})$.


Let us denote by $(K, I)$ a local minimum of $f$. Then $KI =
(I-1)(I-\frac{1}{2})$. Besides we can select $(K, I)$ such that
$\frac{I(I-1)}{2}+KI-X = 0$. This yields $3I^{2}-4I-(2X-1) = 0$, and
we obtain $I = \frac{2}{3}+\frac{\sqrt{1+6X}}{3}$.

An optimal solution of $\mathcal{P}''(X)$ is thus given by
\begin{displaymath}
  \begin{cases}
    I^{*} = \frac{2}{3}+\frac{\sqrt{1+6X}}{3} \\
    K^{*} = (I^{*}-\frac{1}{2})(1-\frac{1}{I^{*}})
  \end{cases}
\end{displaymath}

and its objective value is
\begin{align*}
    \mathcal{H}''(X) & = K^{*}\frac{I^{*}(I^{*}-1)}{2} \\
    & = \frac{1}{4}(I^{*}-1)^{2}(2I^{*}-1) \\
    & = \frac{1}{108}(\sqrt{1+6X}-1)^{2}(2\sqrt{1+6X}+1) \\
    & \leqslant \frac{(\frac{\sqrt{6X}}{3})^{3}}{2} = \frac{\sqrt{2}}{3\sqrt{3}}X^{\frac{3}{2}}
\end{align*}

To understand why the last inequality holds, one can observe that the
function $X \mapsto \mathcal{H}''(X)-
\frac{\sqrt{2}}{3\sqrt{3}}X^{\frac{3}{2}}$ equals $0$ for $X = 0$.
Besides,

\begin{align*}
  \frac{\partial}{\partial X} \Big[ \mathcal{H}''(X)-
    \frac{\sqrt{2}}{3\sqrt{3}}X^{\frac{3}{2}} \Big]
  & = \frac{1}{6} \big( \sqrt{1+6X}-1 \big) -\sqrt{\frac{X}{6}} \\
  & = \frac{1}{6} \big[ \sqrt{1+6X}-(1+\sqrt{6X}) \big]
\end{align*}

which is obviously negative.
\end{proof}

\subsubsection{Final Result}

\begin{proof}[Proof of Theorem~\ref{thm:syrk.main.bound}]
  The result follows directly from
  Corollary~\ref{lem:balanced.is.optimal},
  Lemma~\ref{lem:j.equals.zero} and Lemma~\ref{lem:final.optim}.
\end{proof}

\begin{corollary}
  The number of data accesses required to perform a SYRK operation
  where $\mat{A}$ has size $N\times M$, with memory $S$, is at least

  $$Q_{\textrm{SYRK}}(N, M, S) \geqslant \frac{1}{\sqrt{2}}
  \frac{N^{2}M}{\sqrt{S}}.$$
  \label{thm:syrk.final.bound}
\end{corollary}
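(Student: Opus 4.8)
The plan is to feed the subcomputation bound of Theorem~\ref{thm:syrk.main.bound} into Lemma~\ref{lemma:kwasniewski} and then optimize the single free parameter $X$. First I would identify the quantity $\card{V}$ appearing in the lemma: for the SYRK kernel the relevant set of operations is $\mathcal{S}$, and there are $\binom{N}{2}$ subdiagonal pairs $(i,j)$, each combined with $M$ choices of $k$, so $\card{V} = \card{\mathcal{S}} = \frac{N(N-1)}{2}M$.

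Next I would verify that Theorem~\ref{thm:syrk.main.bound} supplies precisely the hypothesis required by Lemma~\ref{lemma:kwasniewski}. Since for every statement of Algorithm~\ref{alg:syrk} the written variables form a subset of the read variables, a subcomputation $H$ that reads at most $X$ elements satisfies $D(H) \leqslant X$, and hence by the theorem $\card{H} \leqslant \frac{\sqrt{2}}{3\sqrt{3}}X^{3/2}$. We may therefore take $H_{\max} = \frac{\sqrt{2}}{3\sqrt{3}}X^{3/2}$ for any $X > S$, and the lemma yields
$$Q \geqslant \frac{\card{V}(X-S)}{H_{\max}} = \frac{N(N-1)M}{2}\cdot\frac{3\sqrt{3}}{\sqrt{2}}\cdot\frac{X-S}{X^{3/2}}.$$

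The only genuine computation is the choice of $X$ maximizing $g(X) = \frac{X-S}{X^{3/2}}$. Differentiating gives $g'(X) = X^{-5/2}\left(\frac{3}{2}S - \frac{1}{2}X\right)$, so the maximum is attained at $X = 3S$, which indeed respects the constraint $X > S$; there $g(3S) = \frac{2S}{(3S)^{3/2}} = \frac{2}{3\sqrt{3}\sqrt{S}}$. Substituting collapses the constants, since $\frac{3\sqrt{3}}{\sqrt{2}}\cdot\frac{2}{3\sqrt{3}} = \frac{2}{\sqrt{2}}$, leaving
$$Q \geqslant \frac{N(N-1)M}{\sqrt{2}\sqrt{S}}.$$

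I do not expect any real obstacle: once Theorem~\ref{thm:syrk.main.bound} is available, the corollary reduces to this one-parameter optimization. The only subtlety worth flagging is that the argument naturally produces $N(N-1)$ rather than $N^2$; the gap is an $\bigo{NM/\sqrt{S}}$ lower-order term, so the bound obtained is $\frac{N(N-1)M}{\sqrt{2S}}$, whose dominant term is exactly $\frac{1}{\sqrt{2}}\frac{N^2 M}{\sqrt{S}}$ as stated, the corollary recording only the leading coefficient.
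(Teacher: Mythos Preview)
Your proposal is correct and follows essentially the same route as the paper: apply Theorem~\ref{thm:syrk.main.bound} to bound $H_{\max}$, plug into Lemma~\ref{lemma:kwasniewski}, and set $X=3S$. The paper simply asserts $X=3S$ (relegating the optimality to a footnote) and uses $\card{\mathcal{S}}=\frac{N^2M}{2}$ directly, whereas you derive $X=3S$ by differentiating and keep the exact count $\frac{N(N-1)M}{2}$, correctly noting that the resulting $N(N-1)$ differs from the stated $N^2$ only in a lower-order term.
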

\begin{proof}
Consider the computational DAG of the SYRK operation, which has $\card{\mathcal{S}} =
\frac{N^2M}{2}$ vertices. According to
Theorem~\ref{thm:syrk.main.bound}, for any $X$, any subcomputation $H$
of this DAG which reads at most $X$ elements has size $\card{H}
\leqslant \frac{\sqrt{2}}{3\sqrt{3}}X^{\frac{3}{2}}$.

In particular\footnote{The value $X=3S$ is chosen to obtain the
  strongest possible bound by maximizing the ratio
  $\frac{\card{H}}{X-S}$.}, for $X = 3S$, we get $\card{H} \leqslant
\sqrt{2}\cdot S^{\frac{3}{2}}$. According to
Lemma~\ref{lemma:kwasniewski}, the maximal operational intensity of
SYRK is $\rho = \frac{\card{H}}{3S - S} \leqslant
\sqrt{\frac{S}{2}}$. This yields the following bound on the number of
data accesses for the complete SYRK operation:

  $$Q_{\textrm{SYRK}}(N, M, S) \geqslant \frac{\card{\mathcal{S}}}{\rho} =
\frac{1}{\sqrt{2}} \frac{N^{2}M}{\sqrt{S}}.$$
\end{proof}

\subsection{Cholesky factorization}

\label{subsec:chol.lower.bound}

We now consider the Cholesky factorization, as described by
Algorithm~\ref{alg:cholesky}. As mentioned above, we focus on the
\emph{update operations}, described by the set
$$\mathcal{C} =  \set{(i, j, k) \in \range[1]{N}^3}{i > j > k}.$$

For a given $X$, the largest subset $H$ that accesses at most $X$
elements can be found by solving $\mathcal{P}(X)$, in which the
constraint $H \subseteq \mathcal{S}$ is replaced by $H
\subseteq\mathcal{C}$. We consider a relaxed version, in which the
constraint is instead $H \subseteq \mathcal{C}'$, where
$$\mathcal{C}' =  \set{(i, j, k) \in \range[1]{N}^3}{i > j}.$$

Since $\mathcal{C} \subseteq \mathcal{C}'$, the optimal value of this
relaxed version is not smaller than the optimal value of the original
one. We can now remark that the relaxed version is a special case of
$\mathcal{P}(X)$ where $M = N$, so that we can directly apply
Theorem~\ref{thm:syrk.main.bound}, which leads to the following
corollary:
\begin{corollary}
  The number of data accesses required to perform a Cholesky operation
  on a matrix $\mat{A}$ of size $N\times N$, with memory $S$, is at least
  
  $$Q_{\textrm{Chol}}(N, S) \geqslant \frac{1}{3\sqrt{2}}
  \frac{N^{3}}{\sqrt{S}}.$$
  \label{thm:cholesky.final.bound}
\end{corollary}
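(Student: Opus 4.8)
The plan is to follow exactly the argument used for the SYRK bound in Corollary~\ref{thm:syrk.final.bound}, exploiting the reduction already set up above. The preceding discussion establishes two facts I would reuse verbatim: first, that restricting attention to the update operations $\mathcal{C} = \set{(i,j,k)}{i>j>k}$ and then relaxing the feasible set from $\mathcal{C}$ to $\mathcal{C}' = \set{(i,j,k)}{i>j}$ can only increase the size of the largest subcomputation accessing at most $X$ elements; and second, that the relaxed problem is literally $\mathcal{P}(X)$ with $M = N$. Consequently Theorem~\ref{thm:syrk.main.bound} applies directly, and any subcomputation $H$ of the Cholesky DAG reading at most $X$ elements satisfies $\card{H} \leqslant \frac{\sqrt{2}}{3\sqrt{3}}X^{3/2}$.

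With this in hand the remainder is mechanical. First I would count the vertices: the update set has $\card{\mathcal{C}} = \frac{N(N-1)(N-2)}{6}$, whose leading term is $\frac{N^3}{6}$ (the same leading-order accounting as was used for $\card{\mathcal{S}}$ in the SYRK corollary). Next I would instantiate Lemma~\ref{lemma:kwasniewski} with the choice $X = 3S$, which maximizes the ratio $\frac{\card{H}}{X-S}$: this gives $\card{H} \leqslant \sqrt{2}\,S^{3/2}$ and hence an operational intensity bounded by $\rho = \frac{\card{H}}{3S-S} \leqslant \sqrt{S/2}$. Dividing the vertex count by $\rho$ then yields
$$Q_{\textrm{Chol}}(N,S) \;\geqslant\; \frac{\card{\mathcal{C}}}{\rho} \;\geqslant\; \frac{N^3/6}{\sqrt{S/2}} \;=\; \frac{\sqrt{2}}{6}\,\frac{N^3}{\sqrt{S}} \;=\; \frac{1}{3\sqrt{2}}\,\frac{N^3}{\sqrt{S}},$$
using $\frac{\sqrt{2}}{6} = \frac{1}{3\sqrt{2}}$, which is the claimed bound.

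The arithmetic here is routine; the genuine content — and the step I would treat most carefully — is the reduction itself, not the final computation. The delicate point is that the bound of Theorem~\ref{thm:syrk.main.bound} was derived for SYRK, where the result matrix $\mat{C}$ and the operand $\mat{A}$ are distinct, whereas in Cholesky the entries being updated and the factor entries used as operands both live in the single matrix \mat{A}. I would therefore make sure that passing to $\mathcal{C}'$ is a bona fide \emph{relaxation}: dropping the constraint $j > k$ only enlarges the class of admissible subcomputations, so the maximal subcomputation size can only grow, and the SYRK bound thus remains a valid upper bound on $\card{H}$ for the actual Cholesky DAG. I would also check that the data-access function $D(H)$ for the relaxed updates has precisely the SYRK form, with $\cup_k \restr{H}{k}$ counting the accessed entries $A_{i,j}$ and $\footprint{\restr{H}{k}}$ counting the accessed entries of column $k$; once this correspondence is confirmed, the entire SYRK machinery transfers and the corollary follows.
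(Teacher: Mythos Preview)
Your proposal is correct and follows essentially the same route as the paper: invoke the relaxation $\mathcal{C}\subseteq\mathcal{C}'$ set up just before the corollary to identify the subcomputation bound with the SYRK bound of Theorem~\ref{thm:syrk.main.bound}, then plug $X=3S$ into Lemma~\ref{lemma:kwasniewski} and divide the vertex count $\card{\mathcal{C}}\sim N^3/6$ by the resulting $\rho\leqslant\sqrt{S/2}$. Your added remarks on why $D(H)$ retains its SYRK form in the Cholesky setting are a useful sanity check but not a departure from the paper's argument.
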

\begin{proof}
  The computational DAG of the update operations of the Cholesky
  kernel contains $\card{\mathcal{C}} = \frac{N^3}{6}$ update
  operations. According to Theorem~\ref{thm:syrk.main.bound}, for any
  $X$, any subcomputation $H$ of this DAG which reads at most $X$
  elements has size $\card{H} \leqslant
  \frac{\sqrt{2}}{3\sqrt{3}}X^{\frac{3}{2}}$.

  As previously, we apply Lemma~\ref{lemma:kwasniewski} to
  the case where $X = 3S$, and obtain that the maximal operational
  intensity of the update operations in Cholesky is $\rho =
  \frac{\card{H}}{3S - S} \leqslant \sqrt{\frac{S}{2}}$. Since a Cholesky
  kernel needs to perform all update operations, this yields the
  following bound on the number of data accesses
  $$Q_{\textrm{Chol}}(N, S) \geqslant \frac{\card{\mathcal{C}}}{\rho} =
  \frac{1}{3\sqrt{2}} \frac{N^{3}}{\sqrt{S}}.$$
\end{proof}

\section{Communication-Optimal Algorithms}
\label{sec:algs}

In this section, we propose algorithms which perform the same
operations as Algorithms~\ref{alg:syrk} and~\ref{alg:cholesky}, but
with an ordering that allows to perform fewer I/O operations.  We
start by presenting an algorithm for the SYRK kernel, which we then
use to design an algorithm for the Cholesky kernel.

To simplify the presentation of the algorithms, we
index the matrices in the range \range{N-1} instead of \range[1]{N}.
Our algorithms rely on previously proposed algorithms from
B\'ereux~\cite{doi:10.1137/06067256X}, more specifically the one-tile,
narrow-block variants of \OOCSYRK and \OOCTRSM, and the one-tile,
left-looking variant of Cholesky \OOCCHOL. For conciseness, we denote
them respectively by \oocsyrk, \ooctrsm and \oocchol, with the following number of
I/O operations:
\begin{align*}
  \qty{\oocsyrk}{N, M} &= \frac{N^2M}{\sqrt{S}} + \bigo{NM}\\
  \qty{\ooctrsm}{N, M} &= \frac{N^2M}{\sqrt{S}} + \bigo{NM}\\
  \qty{\oocchol}{N} &= \frac{N^3}{3\sqrt{S}} + \bigo{NM}
\end{align*}

The analysis of communication cost in this section is asymptotic in
the following sense: we assume that $S$ remains constant, and that the
sizes $N$ and $M$ of the matrices grow without bounds.

In the following algorithms, given a matrix \mat{A} and two sets of
indices $X$ and $Y$, we use $\mat{A}[X, Y]$ to denote the submatrix of
$\mat{A}$ indexed with indices in $X \times Y$.

\subsection{TBS: Triangular Block SYRK}
\label{subsec:tbs}

The proof of Theorem~\ref{thm:syrk.main.bound} shows that the largest
operational intensity in the SYRK kernel is achieved when computing
the elements of $\mat{C}$ in a triangle $T(m)$, which is located at
the top-left of matrix $\mat{C}$. The result of
Corollary~\ref{thm:syrk.final.bound} is tight if all (or at least
most) parts of the computation have the same operational
intensity. But it is not clear whether it is possible to tile the
whole computation space with triangles. It is easy around the
diagonal, but what about the elements of the matrix away from the
diagonal?

\begin{algorithm}
  Partition $C$ in \emph{blocks} of size $S$\;
  \For{each block $B$}{
    Load the corresponding elements of $C$ in memory\;
    \For{$i=0$ to $M-1$}{
      Load the required elements of $A[\cdot,i]$\;
      Update block $B$ with these elements\;
    }
    Remove block $B$ from memory\;
  }
  \caption{Generic out-of-core SYRK algorithm}
  \label{alg:generic.syrk}
\end{algorithm}

Our algorithm uses the generic scheme described in
Algorithm~\ref{alg:generic.syrk}: store a block of elements of the
result matrix in memory, and iteratively load elements from $A$ to
update this block. To maximize memory efficiency, it makes sense that blocks 
would contain $S$ elements. In the \OOCSYRK algorithm proposed by
B\'ereux, the blocks are squares of $\sqrt{S}\times \sqrt{S}$, which is
the optimal shape without data reuse (for example, squares are the
optimal shape for non-symmetric GEMM multiplication). As mentioned
above, in order to match the lower bound for SYRK, we need to have
blocks shaped as triangles, up to row and column reordering:
such blocks are \emph{triangle blocks} $\TB(R)$, as defined in
Definition~\ref{def:triangle.block}. Indeed, $\TB(R)$ is the set of
indices of the elements of $C$ that can be updated with elements of
$A$ whose row belong in $R$.

We prove here that it is actually possible to tile (almost all) the
result matrix \mat{C} with triangle blocks, each containing roughly
$S$ elements.

\subsubsection{Partitioning the result matrix}

We fix $k$ such that
$$S \geqslant k+ \frac{k(k-1)}{2} = \frac{k(k+1)}{2}.$$ This ensures
that the memory can fit a triangle of side length $k$ from the result
matrix \mat{C}, plus a vector of $k$ elements of \mat{A} used for the
update. Let us assume for the moment that $N = ck$ for some value
$c$. We will see later that not all values of $c$ are eligible, and we
will discuss how to choose an appropriate value. We decompose the
result matrix $\mat{C}$ in $\frac{k(k-1)}{2}$ square \emph{zones} of
size $c \times c$. The rest of the matrix ($k$ triangle-shaped zones
on the diagonal) will be considered later. In \TBS, a block contains
exactly one element from each of these square zones, as depicted in
Figure~\ref{fig:tbs.zones}. For $0\leqslant i, j < c$, we denote by
$B^{i,j}$ the block which contains the element $(i, j)$ of the
top-most zone (which is the element $(i+c, j)$ of the matrix
$\mat{C}$).

Let $R^{i,j}$ be the row indices of block $B^{i,j}$. Since we search
for blocks with one element per zone, we can write
\begin{equation}
  B^{i,j} = \TB(R^{i,j}), \;\text{ with }\; R^{i,j} = \set{u\cdot c +
    f^{i,j}(u)}{0 \leqslant u < k}, \label{eq:def.blocks}
\end{equation}
where $0 \leqslant f^{i,j}(u) < c$ gives the position of the row of
$B^{i,j}$ within the $u$-th row of zones (see
left of Figure~\ref{fig:tbs.final}). To ensure that $B^{i,j}$
contains $(i+c, j)$, we just need to have $f^{i,j}(0) = j$ and
$f^{i,j}(1) = i$. We can thus specify our triangle blocks with an
\emph{indexing family}:

\begin{figure}
  \newcommand{\block}[3][0]{
    \begin{scope}
      \foreach \u [count=\i from 0] in {#2} {
        \draw[#3] (-0.2-#1*0.05, \i + 1-\u/5) -- (-0.2-#1*0.05, \i+1-\u/5-0.2);
        \begin{scope}[on background layer]
          \draw[gray,thin] (0, \i + 1-\u/5 - 0.1) -- (5-\i-1+\u/5+0.1,
          \i + 1-\u/5 - 0.1) -- (5-\i-1+\u/5+0.1, 0);
        \end{scope}

      }
      \foreach \u [count=\i from 0] in {#2} {
        \foreach \v [count=\j from 0] in {#2} {
          \ifthenelse{\i < \j}{
            \draw[semithick,fill=#3] (4-\j+\v/5, \i +1- \u/5) rectangle
            ( 4-\j + \v/5+0.2, \i+1-\u/5-0.2);
          }
        }
      }
    \end{scope}
  }
  \begin{tikzpicture}[thick]
    \foreach \i in {0, ..., 5} {
      \draw (\i, 0) -- (\i, 5 - \i);
      \draw (0, \i) -- (5 - \i, \i);
    }
    \draw (5, 0) -- (0, 5);

    \draw[<->] (2, -0.2) -- (3, -0.2) node[midway,below]{$c$};
    \draw[<->] (0, -0.6) -- (5, -0.6) node[midway,below]{$N=ck$};

    \block[0]{1, 4, 2, 0, 2}{blue}
    \block[1]{4, 1, 3, 1, 2}{red}
    \block[2]{2, 0, 4, 3, 1}{green!80!black}
    \block[3]{1, 0, 2, 2, 4}{orange}

    \begin{scope}[yshift=3.8cm]
      \foreach \x/\color in {2.25/orange, 2.5/green!80!black, 2.75/red, 3/blue} {
        \draw[fill=\color,semithick] (\x,0.4) rectangle (\x+0.2, 0.6);
        \draw[color=\color] (\x+0.1,-0.1) -- (\x+0.1, 0.1);
      }
      \node[anchor=west] at (3.5, 0.5) {4 triangle blocks};
      \node[anchor=west] at (3.5, 0) {the corresponding row indices};
    \end{scope}
    \draw[ultra thick] (1, 1) rectangle (2, 2);
    \node[anchor=west] at (3.5, 2.8) {a $c$ by $c$ square zone};
    \draw[<-] (1.75, 1.6) ..controls (2.1, 2.8) .. (3.3, 2.8);

    \draw[semithick, fill=blue, line join=bevel] (4-2+2/5, 0 +1- 1/5) |-
    ( 4-2 + 2/5+0.2, 0+1-1/5-0.2) --cycle;
    \draw[<-] (4-2+2/5+0.2, 0+1-1/5) ..controls (3, 1.6) .. (4.5, 1.6);
      \node[anchor=west, text width=3.2cm] at (4.6, 1.6) {two triangle
        blocks with two common row indices ($\star$) overlap};
      \node[anchor=east] at (-0.4, 0+1-1/5-0.1) {$\star$};
      \node[anchor=east] at (-0.4, 2 +1- 2/5-0.1) {$\star$};
  \end{tikzpicture}
  \caption{Zones and blocks in the \TBS algorithm. Each block has one
    element in each zone.}
  \label{fig:tbs.zones}
\end{figure}

\begin{figure}
  \newcommand{\block}[3][0.4]{
    \begin{scope}
      \foreach \u [count=\i from 0] in {#2} {
        \foreach \v [count=\j from 0] in {#2} {
          \ifthenelse{\i < \j}{
            \draw[fill=#3] (3.5-#1-#1*\j+#1*\v/5, #1*\i +#1- #1*\u/5) rectangle
            ( 3.5-#1-#1*\j + #1*\v/5+#1*0.2, #1*\i+#1-#1*\u/5-#1*0.2);
          }
        }
      }
    \end{scope}
  }
  \begin{tikzpicture}[thick,node font=\footnotesize,scale=0.8]
    \foreach[count=\i from 0] \u in {4, ..., 1} {
      \draw[thin,gray] (-2, \u) -- (0, \u);
      \draw[snake=brace] (-2, \u-1+0.1) -- (-2, \u-0.1)
      node[midway,left=.5mm] {$u=\i$};
      \ifthenelse{\i > 0} {
        \draw (0, \u) rectangle (1, \u-1);
      }
    }
    \draw[thin,gray] (-2, 0) -- (0, 0);
    \draw[line join=bevel] (2, 2) -- (0, 4) -- (0, 0) -- (2, 0);

    \begin{scope}
      \pgfmathsetmacro{\ll}{3/5+0.1}
      \foreach \u [count=\i from 0] in {2, 1, 4, 3} {
        \pgfmathsetmacro{\index}{int(3-\i)}
        \draw[blue] (-0.1, \i + 1-\u/5) -- (-0.1, \i+1-\u/5-0.2);
        \draw[thin,->] (-0.2, \i+1) -- (-0.2, \i + 1 - \u/5-0.1)
        node[pos=0.6,left,node font=\footnotesize]{$f(\index) = \u$};
        \begin{scope}[on background layer]
          \draw[gray,thin] (0, \i + 1-\u/5 - 0.1) -- (\ll, \i + 1-\u/5 - 0.1);
        \end{scope}
      }
      \begin{scope}[on background layer]
        \draw[gray,thin] (\ll, 4-\ll) -- (\ll, 0);
      \end{scope}

      \pgfmathsetmacro{\v}{3}
      \foreach \u [count=\i from 0] in {2,1,4} {
        \draw[fill=blue] (0+\v/5, \i +1- \u/5) rectangle
        ( 0 + \v/5+0.2, \i+1-\u/5-0.2);
        \begin{scope}[on background layer]
          \draw[gray,thin] (\ll, \i + 1-\u/5 - 0.1) -- (1.9, \i + 1-\u/5 - 0.1);
        \end{scope}
      }
    \end{scope}

    \begin{scope}[xshift=3.4cm,yshift=0.5cm]
    \foreach \i in {0, 0.5, ..., 3} {
      \draw[gray, very thin] (\i, 0) -- (\i, 3.5 - \i);
      \draw[gray, very thin] (0, \i) -- (3.5 - \i, \i);
      \draw[thick] (\i, 3-\i) -- (\i+0.5, 3-\i);
      \draw[thick] (\i, 3.5-\i) -- (\i, 3-\i);
    }
    \draw[thick] (0, -0.5) -- (4, -0.5) -- (0, 3.5) -- (0, -0.5);
    \draw[thick] (0, 0) -- (3.5, 0);

    \block[0.5]{3, 2, 1, 0, 4, 3, 2}{blue}

    \draw[<->] (-0.2, 0) -- (-0.2, 3.5) node[midway,left]{$ck$};
    \draw[<->] (-0.2, 0) -- (-0.2, -0.5) node[midway,left]{$l=N-ck$};

    \node at (2, -0.25) {\OOCSYRK};
    \node[fill=white] at (1.25, 0.7) {Triangle blocks};
    \node[anchor=south] at (2.5,  2.2) {Recursive calls};

    \foreach \i in {1, 1.5, ..., 2.5} {
      \draw[<-] (\i+0.1, 3.2-\i) -- (2.5, 2.2);
    }
    \end{scope}
  \end{tikzpicture}
  \caption{{\em Left:} $f^{i,j}(u)$ gives the position of the row of
    $B^{i,j}$ within the $u$-th row of zones. {\em Right:} which parts of the matrix $\mat{C}$ are computed
    by which method in the \TBS Algorithm.}
  \label{fig:tbs.final}
\end{figure}

\begin{definition}[Indexing family]
  A $(c, k)$-\emph{indexing family} is a family of functions $f^{i, j}(u)$, defined for
  $(i,j)$ in $\range{c-1}^2$, with:
  \begin{align*}
    & f^{i,j}: \range{k-1} \mapsto \range{c-1}\\
    \forall i, j, \quad & f^{i, j}(0) = j \quad\text{and}\quad f^{i, j}(1) = i\\
  \end{align*}
\end{definition}

To enforce the validity of the algorithm, triangle blocks $B^{i,j}$
must not overlap. If two functions $f^{i,j}$ and $f^{i',
  j'}$ agree for two different values $u$ and $v$, the corresponding
blocks $B^{i,j}$ and $B^{i',j'}$ have two row indices in common, and as can be seen on
Figure~\ref{fig:tbs.zones}, these blocks are not disjoint. We thus need to
consider \emph{valid} indexing families:

\begin{definition}[Validity]
  A $(c, k)$-indexing family $f$ is \emph{valid} if
  $$\forall u \neq v, \begin{cases}f^{i, j}(u) &= f^{i', j'}(u)\\ f^{i,
    j}(v) &= f^{i', j'}(v)\\
  \end{cases}\quad
  \Longrightarrow i = i' \text{ and } j
    = j'.$$
\end{definition}

It turns out that this condition is sufficient to ensure
no collisions:
\begin{lemma}
  If $f$ is a valid $(c, k)$-indexing family, then the sets $B^{i,j}$
  defined in Equation~\ref{eq:def.blocks} are pairwise
  disjoint.
\end{lemma}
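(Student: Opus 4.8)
The plan is to argue by contradiction: suppose a single pair belongs to two blocks $B^{i,j}$ and $B^{i',j'}$, and then show that validity of $f$ forces $(i,j)=(i',j')$. The crucial observation is that the row-index set $R^{i,j}$ encodes the indexing function through Euclidean division. Indeed, since $0\leqslant f^{i,j}(u)<c$ for every $u$, the element $u\cdot c+f^{i,j}(u)$ of $R^{i,j}$ has quotient exactly $u$ and remainder exactly $f^{i,j}(u)$ when divided by $c$. Hence from any row index $r\in R^{i,j}$ one can recover both the level $u=\lfloor r/c\rfloor$ and the value $f^{i,j}(u)=r\bmod c$, and this recovery depends only on $r$ (and $c$), not on which block we started from. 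In particular the map $u\mapsto u\cdot c+f^{i,j}(u)$ is injective, so distinct elements of $R^{i,j}$ arise from distinct levels.

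First I would take a pair $(r,r')$ with $r>r'$ lying in both $B^{i,j}=\TB(R^{i,j})$ and $B^{i',j'}=\TB(R^{i',j'})$. By the definition of a triangle block this means $r,r'\in R^{i,j}$ and $r,r'\in R^{i',j'}$. Setting $u=\lfloor r/c\rfloor$ and $v=\lfloor r'/c\rfloor$, the observation above shows that $r$ corresponds to level $u$ in both index sets and $r'$ corresponds to level $v$ in both; reading off the remainders gives $f^{i,j}(u)=r\bmod c=f^{i',j'}(u)$ and $f^{i,j}(v)=r'\bmod c=f^{i',j'}(v)$. Moreover $u\neq v$: if we had $u=v$, then $r=uc+f^{i,j}(u)=vc+f^{i,j}(v)=r'$, contradicting $r>r'$.

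Finally I would invoke validity of $f$. Since $f^{i,j}$ and $f^{i',j'}$ agree at the two distinct points $u$ and $v$, the validity condition yields $i=i'$ and $j=j'$, so the two blocks are actually identical; this contradicts the assumption that distinct blocks shared a pair, proving pairwise disjointness. I do not expect a genuine obstacle here: the only delicate point is the recovery of the level $u$ from $r$, which hinges entirely on the range constraint $0\leqslant f^{i,j}(u)<c$ making the decomposition a bona fide Euclidean division. Without that bound the level would not be uniquely determined by $r$, and the reduction to the validity hypothesis would break down.
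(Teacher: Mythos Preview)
Your argument is correct and is essentially the same as the paper's: both assume a common pair in two blocks, use the range constraint $0\leqslant f^{i,j}(u)<c$ to recover the levels via Euclidean division, observe that the two levels are distinct because $r\neq r'$, and then invoke validity to conclude $(i,j)=(i',j')$. The paper phrases it as a contrapositive and is terser about the Euclidean-division step, but the structure and key idea are identical.
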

\begin{proof}
We prove the contrapositive of this statement: if two $B^{i,j}$ sets
are not disjoint, then $f$ is not valid. Indeed, let us consider two
different pairs $(i,j)$ and $(i', j')$ such that $B^{i,j}\cap
B^{i',j'} \neq \emptyset$. There exist $(u, v)$ and $(u', v')$, with
$u\neq v$ and $u'\neq v'$, such that:
\begin{align*}
  uc + f^{i, j}(u) &= u'c + f^{i', j'}(u') \\
  vc + f^{i, j}(v) &= v'c + f^{i', j'}(v') 
\end{align*}
Since the values of an indexing function are in \range{c-1}, this
implies $u=u'$ and $v=v'$.

Thus, there exist $u\neq v$ and $i, j, i', j'$, with $(i, j) \neq (i',
j')$, such that $f^{i, j}(u) = f^{i', j}(u)$ and $f^{i, j}(v) = f^{i',
j'}(v)$: $f$ is not valid.
\end{proof}

This shows that using a valid indexing family allows to partition
the square zones from Figure~\ref{fig:tbs.zones} in disjoint triangle
blocks. The remaining elements, from the triangular zones close to the
diagonal, can be computed by recursive calls to the \TBS algorithm.
We thus require several valid indexing families for a fixed
$k$ and different values of $c$, since the recursive calls will be made with
different value of $c$. However, we see below that we cannot obtain
valid indexing families for all values of $c$, so we are not yet
ready to describe the complete algorithm.

\subsubsection{Defining a valid indexing family}

In this section, we show that it is possible to define a valid
indexing family for some values of $c \geqslant k-1$. We do this
using simple modulo operations:

\begin{definition}
  The \emph{cyclic} $(c,k)$-indexing family is defined by:
  $$f^{i,j}_c(u) =
  \begin{cases}
    j & \text{if $u = 0$}\\
    i + j(u-1) \mod c& \text{if $u$ > 0}
  \end{cases}
  $$
  \label{def:cyclic.function}
\end{definition}

\begin{lemma}
  If $c \geqslant k-1$ is coprime with all integers in \range[2]{k-2},
  then the cyclic indexing family $f_c$ is valid.
  \label{lemma:cyclic.is.valid}
\end{lemma}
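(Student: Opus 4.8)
The plan is to argue directly rather than through the contrapositive. I would assume the hypothesis of the validity condition holds: there exist pairs $(i,j)$ and $(i',j')$ in $\range{c-1}^2$ and indices $u \neq v$ with $f^{i,j}_c(u) = f^{i',j'}_c(u)$ and $f^{i,j}_c(v) = f^{i',j'}_c(v)$, and my goal is to deduce $i = i'$ and $j = j'$. The recurring device throughout will be that all four of $i, i', j, j'$ lie in $\range{c-1}$, so that any congruence modulo $c$ relating two of them is in fact an equality. I would then split the argument according to whether one of $u, v$ equals $0$, since the index $0$ is the only place where $f_c$ uses the special first branch $f^{i,j}_c(0) = j$, which behaves differently from the generic branch.

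First I would dispatch the boundary case. Assume without loss of generality $u < v$, and suppose $u = 0$. Then the first agreement reads $j = f^{i,j}_c(0) = f^{i',j'}_c(0) = j'$, giving $j = j'$ outright. Since $v > 0$, the second agreement is $i + j(v-1) \equiv i' + j'(v-1) \pmod c$; substituting $j' = j$ cancels the common multiple of $(v-1)$ and leaves $i \equiv i' \pmod c$, hence $i = i'$.

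The main case is when both $u, v \geqslant 1$, so that both agreements use the $u > 0$ branch:
\begin{align*}
  i + j(u-1) &\equiv i' + j'(u-1) \pmod c,\\
  i + j(v-1) &\equiv i' + j'(v-1) \pmod c.
\end{align*}
Here I would subtract the two congruences to eliminate $i$ and $i'$, obtaining $(j-j')(u-v) \equiv 0 \pmod c$. The crux is then to invert the factor $u-v$: since $u, v \in \range[1]{k-1}$ are distinct, their difference satisfies $1 \leqslant |u-v| \leqslant k-2$. By hypothesis $c$ is coprime with every integer in $\range[2]{k-2}$, and it is trivially coprime with $1$, so $|u-v|$ — hence $u-v$ — is invertible modulo $c$. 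This forces $j \equiv j' \pmod c$, so $j = j'$, and substituting back into either congruence yields $i \equiv i' \pmod c$, hence $i = i'$, which completes the proof.

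The step I expect to demand the most care is the control of the range of $u-v$: the coprimality hypothesis is stated precisely for $\range[2]{k-2}$, so after excluding the boundary case $u=0$ I must verify that the difference of two distinct elements of $\range[1]{k-1}$ has absolute value at most $k-2$ and is nonzero, and separately note that the value $1$ is handled for free. Everything else is routine modular arithmetic, kept clean by the observation that reductions modulo $c$ of elements of $\range{c-1}$ are exact equalities. I would also remark that the side hypothesis $c \geqslant k-1$ is consistent with, and essentially forced by, the coprimality requirement — any $c$ lying in $\range[2]{k-2}$ would fail to be coprime with itself — so it plays no further role in establishing validity beyond excluding that degenerate situation.
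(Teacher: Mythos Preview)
Your proof is correct and follows essentially the same approach as the paper: assume the two agreements hold for some $u\neq v$, isolate the case $u=0$ to read off $j=j'$ directly, and in the generic case subtract the two congruences to obtain $(j-j')(u-v)\equiv 0 \pmod c$, then use the coprimality hypothesis to cancel $u-v$ and conclude. Your write-up is slightly more explicit than the paper's in separating the $u=0$ case fully and in noting that $|u-v|=1$ is covered trivially, and your closing remark on why $c\geqslant k-1$ is essentially implied by the coprimality hypothesis is a nice observation the paper does not make, but none of this constitutes a different route.
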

\begin{proof}
  Consider any $u, v \in \range{k-1}$ with $u < v$, and assume that
  $i, j, i', j'$ in \range{c-1} are such that $f^{i,j}_c(u) =
  f^{i',j'}_c(u)$ and $f^{i,j}_c(v) = f^{i',j'}_c(v)$.

  We first prove $j = j'$. If $u=0$, this is direct. Otherwise, we can
  write

  \begin{align*}
    &\begin{cases}
      i + j(u-1) &= i'+j'(u-1) \mod c\\ i + j(v-1) &= i'+j'(v-1) \mod
      c
    \end{cases}\\
    \Leftrightarrow &\begin{cases}
      i-i' &= (j'-j)(u-1) \mod c\\
      i-i' &= (j'-j)(v-1) \mod c
    \end{cases}
  \end{align*}
  This implies:
  \begin{align*}
    (j'-j)(u-1) &= (j'-j)(v-1) &&\mod c\\
    \Leftrightarrow (j'-j)(u-v) &= 0 &&\mod c
  \end{align*}

  Since $u < v$, $0 < u,v \leqslant k-1$, we know that $0 < v-u
  \leqslant k-2$. From our assumption, $v-u$ is coprime with $c$, so
  we obtain $j'-j=0 \mod c$, and thus $j=j'$.

  Then, since $i + j(v-1) = i' + j(v-1) \mod c$, we deduce $i = i'\mod
  c$. Since $i,i'$ are in \range{c-1}, we have $i=i'$.
\end{proof}

We define the constant integer $q$ as the product of all primes no
larger than $k-2$: $q =\displaystyle \prod_{p \text{ prime}, p\leqslant
  k-2}p$. Then $c$ is coprime with all integers in \range[2]{k-2} if
and only if $c$ is coprime with $q$. Notice that $q$ is constant: it
only depends on $k$, thus on $S$, but not on $N$ or $M$.

\begin{algorithm}
  \KwIn{matrices $\mat{A}$ of size $N\times M$ and $\mat{C}$ symmetric of size $N\times N$}
  \KwOut{$\mat{C} \pluseq \mat{A}\cdot\transpose{\mat{A}}$}
  \KwAssume{memory of size $S = \frac{k(k+1)}{2}$}
  $q \gets \text{product of all primes in \range[2]{k-2}}$\;
  $c\gets \text{the largest integer coprime with $q$ below }\frac{N}{k}$\;
  $l\gets N - ck$\;
  \uIf(\comment*[f]{$c$ is too small}){$c < k-1$}{
    \OOCSYRK$(\mat{A}, \mat{C})$;
  }
  \Else{
  Use \OOCSYRK to compute the last $l$ rows of $\mat{C}$\;
  \For(\comment*[f]{recursive calls for triangular zones}){$i=0$ to
    $k-1$}{
    $R \gets \range[ic]{(i+1)c}$\;
    $\text{\TBS}(\mat{A}[R, \cdot], \mat{C}[R, R])$
    }
  \For(\comment*[f]{loop over all blocks}){$(i, j) \in \range{c-1}^2$}{
    $R \gets \set{r_u = uc + f^{i,j}_c(u)}{0 \leqslant u <
      k}$\comment*[r]{see Def.~\ref{def:cyclic.function}}
    Load the elements of C indexed by $\TB(R)$\;
    \For(\comment*[f]{loop over columns of $\mat{A}$}){$i=0$ to $M-1$}{
      Load elements of $\mat{A}$ indexed by $\set{(r,i)}{r \in R}$\;
      \For(\comment*[f]{loops over elements}){$u=0$ to $k-1$}{
        \For(\comment*[f]{ of the block}){$v=0$ to $u-1$}{
          $\mat{C}_{r_u, r_v} \pluseq \mat{A}_{r_u, i}\cdot \mat{A}_{r_v, i}$\;
        }
      }
    }
  }
  }
  \caption{Triangle Block Syrk $\TBS(\mat{A}, \mat{C})$}
  \label{alg:tbs.final}
\end{algorithm}

Now that we know how to build valid indexing families, we are ready to
describe the \TBS algorithm. However, with the constraints on $c$
imposed by Lemma~\ref{lemma:cyclic.is.valid}, it is not possible to
use triangle blocks on the whole matrix $\mat{C}$. Instead, given a
matrix size $N$, we set $c$ to be the largest number coprime with $q$
such that $c\leqslant \frac{N}{k}$. If the obtained $c$ is lower than
$k-1$, we can use the simple \OOCSYRK with square blocks. Otherwise,
$c$ satisfies the condition of Lemma~\ref{lemma:cyclic.is.valid}, so
we can use triangle blocks to compute the first $ck$ rows of $C$, and
the \OOCSYRK algorithm for the remaining $l = N-ck$ rows (see
right of Figure~\ref{fig:tbs.final}). The resulting algorithm is called \TBS,
and is described in Algorithm~\ref{alg:tbs.final}.

\subsubsection{Communication cost analysis}

Let us first notice that the \TBS algorithm loads each entry of $C$ exactly
once (even for the elements computed with \OOCSYRK), so loading these
elements has a communication cost of $\frac{N^2}{2}$. In the
following, we denote by $\qtyvals{\tbs}{N, M}$ the communication cost
of \TBS related to elements of $\mat{A}$, for a matrix $\mat{A}$ of size $N\times
M$.

The definition of $c$ yields $\frac{N}{k} = c + g$, and we need an
upper bound on $g$ to estimate the amount of work performed by
\OOCSYRK. It is easy to see that for any integer $a$, $aq + 1$ is
coprime with $q$. In particular, $\floor{N/kq}q + 1$ is coprime with
$q$, thus $c \geqslant \floor{N/kq}q + 1$, and $g \leqslant q$. Since
$q$ only depends on $S$ and not on $N$ or $M$, we get $g =
\bigo{1}$. Even though $q$ is a constant, it may be considered very
large relative to $S$. However, the bound $g\leqslant q$ is very
pessimistic: sieve methods allow to show that the number of integers
coprime with $q$ in any interval \range[(a-1)q]{aq-1} is exactly
$\prod (p-1)$, where $p$ spans the prime numbers below $k-1$ (see
Example~1.5 in~\cite{opera.de.cribro}). In practice, one can expect
the value of $g$ to be much lower than $q$.

We first consider the elements computed with the \TBS algorithm (in the
first $ck$ rows). There are $c^2$ triangle blocks, and each triangle block loads $kM$
elements of $\mat{A}$. This yields a communication cost $Q_1 = c^2kM$, and
with $c\leqslant \frac{N}{k}$, we obtain $Q_1 \leqslant \frac{N^2M}{k}$.

Elements computed with \OOCSYRK (in the last $l=gk$ rows) are
computed by square $\sqrt{S}\times\sqrt{S}$ blocks, and each block
loads $2M\sqrt{S}$ elements from matrix $\mat{A}$. Since there are at most
$gkN$ such elements, this yields a communication cost $Q_2 \leqslant
\frac{gkN}{S}\cdot 2M\sqrt{S}=\bigo{NM}$.

Adding the elements covered by the recursive calls, we get:
$$\qtyvals{\tbs}{N, M} \leqslant \frac{N^2M}{k} +
k\qtyvals{\tbs}{\frac{N}{k}, M} + \bigo{NM}$$

We can iteratively apply this inequality $t$ times, where $t$ is the
smallest integer such that $\frac{N}{k^t}< k-1$. We thus have
$k^{t-1}<\frac{N}{k}$, and $t = \bigo{\log N}$. Then we get:
\begin{align*}
  \qtyvals{\tbs}{N, M} &\leqslant
  \sum_{i=1}^{t}\frac{N^2M}{k^i} +
  k^t\qtyvals{\oocsyrk}{k, M} + t\cdot \bigo{NM}\\
  &\leqslant \sum_{i=1}^{\infty}\frac{N^2M}{k^i} +
  N\cdot\frac{k^2M}{\sqrt{S}} + \bigo{NM\log N}\\
   &\leqslant N^2M(\frac{1}{1-\frac{1}{k}} - 1) +
  \bigo{NM\log N}\\
   &\leqslant \frac{N^2M}{k-1} +\bigo{NM\log N}\\
\end{align*}

Remember that $k$ is defined by $S = \frac{k(k+1)}{2}$, so
that $k-1 \simeq \sqrt{2S}$.
In total (with the communications required to load elements of $\mat{C}$), we get:

\begin{theorem}
  The total communication cost $\qty{\tbs}{N, M}$ of the \TBS
  algorithm for a matrix $\mat{A}$ of size $N\times M$, with a memory of
  size $S$, is bounded by:

 $$\qty{\tbs}{N, M} \leqslant
  \frac{1}{\sqrt{2}}\cdot\frac{N^2M}{\sqrt{S}} + \frac{N^2}{2} +
  \bigo{NM\log N}$$
\end{theorem}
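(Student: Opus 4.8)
The plan is to assemble the total cost $\qty{\tbs}{N, M}$ as the sum of two independent contributions that the algorithm pays for: the I/O needed to load entries of the result matrix $\mat{C}$, and the I/O needed to load entries of the input matrix $\mat{A}$, which is exactly the quantity $\qtyvals{\tbs}{N, M}$ introduced above. The heavy lifting for the second contribution has already been carried out in the preceding subsection, so what remains is mostly bookkeeping plus one substitution.

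First I would account for the $\mat{C}$-loading cost. As noted above, every phase of \TBS (the triangle-block loop, the recursive calls, and the parts delegated to \OOCSYRK) touches each entry of the lower-triangular part of $\mat{C}$ exactly once; hence these loads contribute a volume of $\frac{N^2}{2}$, which is precisely the second term of the statement. Next I would invoke the recursive bound established just above, namely $\qtyvals{\tbs}{N, M} \leqslant \frac{N^2M}{k-1} + \bigo{NM\log N}$. Recall that this was obtained by summing the per-level contributions $Q_1 \leqslant \frac{N^2M}{k}$ (the $c^2$ triangle blocks, each loading $kM$ entries of $\mat{A}$, with $c \leqslant \frac{N}{k}$) and $Q_2 = \bigo{NM}$ (the \OOCSYRK remainder on the last $l = gk$ rows) over the $t = \bigo{\log N}$ recursion levels, and bounding the resulting geometric series by $\sum_{i \geqslant 1}\frac{N^2M}{k^i} \leqslant \frac{N^2M}{k-1}$. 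Adding the two contributions then gives $\qty{\tbs}{N, M} \leqslant \frac{N^2M}{k-1} + \frac{N^2}{2} + \bigo{NM\log N}$.

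The final step is to convert the factor $\frac{1}{k-1}$ into the advertised $\frac{1}{\sqrt2}\cdot\frac{1}{\sqrt S}$. Since $k$ is defined by $S = \frac{k(k+1)}{2}$, solving the quadratic bounds it as $\sqrt{2S}-1 < k < \sqrt{2S}$, so that $k-1 \simeq \sqrt{2S}$ and $\frac{1}{k-1} \simeq \frac{1}{\sqrt{2S}} = \frac{1}{\sqrt2}\cdot\frac{1}{\sqrt S}$. Substituting into the displayed bound yields the claimed leading term $\frac{1}{\sqrt2}\frac{N^2M}{\sqrt S}$ and completes the theorem.

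The step I expect to be the main obstacle is exactly this last substitution. Because $k-1$ is \emph{strictly} smaller than $\sqrt{2S}$, the naive inequality $\frac{N^2M}{k-1} \leqslant \frac{1}{\sqrt2}\frac{N^2M}{\sqrt S}$ does not hold verbatim for a fixed $S$: the discrepancy $N^2M\bigl(\frac{1}{k-1}-\frac{1}{\sqrt{2S}}\bigr)$ is itself of order $N^2M$, the same order as the leading term. The clean way around this is to honor the asymptotic regime fixed at the start of the section ($S$ constant, $N,M \to \infty$) and read $\frac{1}{\sqrt2}\frac{N^2M}{\sqrt S}$ as the leading-order coefficient approached as $S$ grows, rather than as a strict constant bound; the point worth making explicit is that it is precisely this coefficient $\frac{1}{\sqrt2}$, and not $\frac{1}{k-1}$, that matches the lower bound of Corollary~\ref{thm:syrk.final.bound}, establishing asymptotic optimality of \TBS.
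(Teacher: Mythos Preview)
Your proposal is correct and follows essentially the same approach as the paper: separate the $\mat{C}$-loading cost of $\frac{N^2}{2}$, invoke the already-derived bound $\qtyvals{\tbs}{N,M}\leqslant \frac{N^2M}{k-1}+\bigo{NM\log N}$, and then substitute $k-1\simeq\sqrt{2S}$. The concern you raise about the last step is legitimate, and the paper handles it in exactly the way you anticipate: it simply writes ``$k-1 \simeq \sqrt{2S}$'' and states the theorem, treating $\frac{1}{\sqrt{2}}\frac{N^2M}{\sqrt{S}}$ as the leading coefficient in the regime where $S$ is large enough for the approximation to be tight, rather than as a strict inequality for every fixed $S$.
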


\subsubsection{Tiled version of \TBS}

The \TBS algorithm, as presented in Algorithm~\ref{alg:tbs.final}
achieves an asymptotic complexity which matches the lower bound from
Theorem~\ref{thm:syrk.final.bound}. However, this requires very large
values of $N$, since the condition $c \geqslant k-1$ together with $k
\simeq \sqrt{2S}$ means that the triangular block approach can only be
used for $N \geqslant 2S$. In that case, the matrix is so large that
half a column does not fit in memory.

To make the \TBS algorithm more practical, it is possible to design a
\emph{tiled} version of it, where elements of $\mat{C}$ are no longer
considered individually, but as tiles of size $b\times b$. We thus choose
$b$ and $k$ such that $S = b^2\frac{k(k-1)}{2}$, and set $c =
\frac{N}{kb}$ (actually the largest integer coprime with $q$ below
this value). Instead of loading elements of $C$, we thus load complete
tiles; however we still load elements of $\mat{A}$ one row at a
time. The update operation $\mat{C}_{r_u, r_v} \pluseq \mat{A}_{r_u,
  i}\cdot \mat{A}_{r_v, i}$ thus becomes an outer product.

The communication cost analysis is very similar, only the value of
$Q_1$ changes. There are still $c^2$ blocks, each of which loads $kbM$
elements of \mat{A}. We get $Q_1 = c^2kbM$, with $c \leqslant
\frac{N}{kb}$. Thus $Q_1 \leqslant \frac{N^2M}{kb}$. In turn, this
yields $\qtyvals{\tbs}{N, M} \leqslant \frac{N^2M}{(k-1)b} +
\bigo{NM\log N}$. With $b = \sqrt{\frac{2S}{k(k-1)}}$, we get:
$$\qty{\tbs}{N, M} \leqslant \frac{N^2M}{\sqrt{2S\cdot\frac{k-1}{k}}} +
\frac{N^2}{2} + \bigo{NM\log N}.$$
  
The leading term is now larger than the lower bound by a factor
$\sqrt{{k}/{(k-1)}}$, but this tiled version of the algorithm is
valid for smaller values of $N$. Indeed, the constraint $c\geqslant k-1$
implies $N \geqslant \frac{2S}{b} = \sqrt{2S\cdot k(k-1)}$, and thus
$\frac{N^2}{2} \geqslant k(k-1)$: \TBS is useful as soon as storing the matrix
requires $k(k-1)$ times more memory than available.

\subsection{\LBC: Large Block Cholesky}
\label{subsec:lbc}

The lower bound detailed in Section~\ref{subsec:chol.lower.bound}
is based on the idea that Cholesky factorization generates at least as
many data transfers as SYRK operation. Since \tbs algorithm
performs the SYRK kernel with the minimum amount of I/O operations,
the idea is to use it for the largest possible part of the computation
of the Cholesky factorization.

\subsubsection{Algorithm description}
We implement this strategy in the \textbf{Large Block Cholesky} (\lbc)
algorithm. It is a right-looking, blocked algorithm which performs the
Cholesky factorization of any input symmetric positive definite matrix
$\mat{A}$ making use of \OOCCHOL, \OOCTRSM and \TBS algorithms.
Note that it would be possible to use
  a recursive call to \LBC instead of \OOCCHOL, since \LBC performs
  fewer transfers. However, it turns out that the successive Cholesky
  factorizations of $\mat{A}[I_0, I_0]$ do not contribute to the
  higher order term, so we opt for \OOCCHOL to simplify the
  presentation. \lbc modifies $\mat{A}$
in-place to yield a lower triangular matrix $\mat{L}$ as output such
that $\mat{A} = \mat{L}\cdot\transpose{\mat{L}}$. The steps of the
algorithm are detailed in Algorithm~\ref{algo:lbc} and described on Figure~\ref{fig:chol_2x2}.

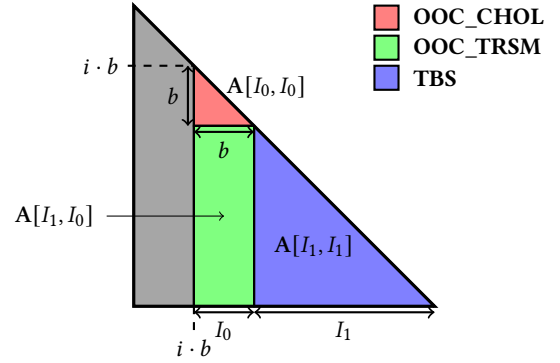
\begin{figure}
  \begin{center}
    \begin{tikzpicture}[thick, scale=0.4]

      \filldraw[fill=black!35] (0, 0) -- (0, 10) -- (2, 8) -- (2, 0) -- cycle ;
      \filldraw[fill=red!50] (2, 6) -- (2, 8) -- (4, 6) -- cycle ;
      \filldraw[fill=green!50] (2, 0) -- (2, 6) -- (4, 6) -- (4, 0) -- cycle ;
      \filldraw[fill=blue!50] (4, 0) -- (4, 6) -- (10, 0) -- cycle ;

      \draw[very thick] (0, 0) -- (0, 10) -- (10, 0) -- cycle;
      \draw (2, 8) -- (2, 0);
      \draw (4, 6) -- (4, 0);
      \draw (2, 6) -- (4, 6);

      \draw[dashed] (2, 0) -- (2, -0.8);
      \node [below] () at (2, -0.8) {$i \cdot b$};
      \draw[dashed] (-0.2, 8) -- (2, 8);
      \node [left] () at (-0.2, 8) {$i \cdot b$};

      \draw[<->] (1.8, 8) -- (1.8, 6) node[midway, left]{$b$};
      \draw[<->] (2, 5.8) -- (4, 5.8) node[midway, below]{$b$};
      \draw[<->] (2, -0.2) -- (4, -0.2) node[midway, below]{$I_{0}$};
      \draw[<->] (4, -0.2) -- (10, -0.2) node[midway, below]{$I_{1}$};

      \node [anchor=west] () at (2.8, 7.3) {$\mat{A}[I_{0}, I_{0}]$};
      \node [anchor=east] () at (-1, 3) {$\mat{A}[I_{1}, I_{0}]$};
      \draw [thin, ->] (-0.9, 3) -- (3, 3); 
      \node [] () at (6, 2) {$\mat{A}[I_{1}, I_{1}]$};

      \coordinate (L) at (8, 10);
      \filldraw[draw, fill=red!50] (L) rectangle +(0.8, -0.8);
      \node [anchor=west] () at ([shift={(1, -0.4)}] L) {\OOCCHOL};
      \filldraw[draw, fill=green!50] (L)+(0, -1) rectangle ([shift={(0.8, -1.8)}] L);
      \node [anchor=west] () at ([shift={(1, -1.4)}] L) {\OOCTRSM};
      \filldraw[draw, fill=blue!50] (L)+(0, -2) rectangle ([shift={(0.8, -2.8)}] L);
      \node [anchor=west] () at ([shift={(1, -2.4)}] L) {\TBS};
    
    \end{tikzpicture}
  \end{center}
  \caption{Algorithm LBC: updating the three parts of $\mat{A}$ at iteration $i$}
  \label{fig:chol_2x2}
\end{figure}

\begin{algorithm}
  \KwIn{$\mat{A}$: $N \times N$ symmetric positive definite matrix}
  \KwIn{$b$: block size}
  \KwAssume{$b | N$}

  \KwOut{$\mat{L}$: $N \times N$ lower triangular matrix s.t. $\mat{A} = {\mat{L}}\cdot\transpose{\mat{L}}$}
  \For(){$i=0$ to $\lfloor \frac{N}{b} \rfloor$}{
    $I_{0} = \range[i \cdot b]{(i+1) \cdot b}$\;
    $\mat{A}[I_{0}, I_{0}] \leftarrow$ \OOCCHOL$(\mat{A}[I_{0}, I_{0}])$\;
    \If{$(i+1) \cdot b < N$}{
      $I_{1} = \range[(i+1) \cdot b]{N}$\;
      $\mat{A}[I_{1}, I_{0}] \leftarrow$ \OOCTRSM$(\mat{A}[I_{0}, I_{0}], \mat{A}[I_{1}, I_{0}])$\;
      $\mat{A}[I_{1}, I_{1}] \leftarrow$ \TBS$(\mat{A}[I_{1}, I_{0}], \mat{A}[I_{1}, I_{1}])$\;
    }
  }
  \caption{Large Block Cholesky $\LBC(\mat{A})$}
  \label{algo:lbc}
\end{algorithm}

\LBC is a so-called \emph{right-looking} variant of Cholesky
factorization. At each iteration, the final values of the two leftmost
panels $\mat{A}[I_{0}, I_{0}]$ and $\mat{A}[I_{1}, I_{0}]$ are
computed; $\mat{A}[I_{1}, I_{0}]$ is then used to update the
right panel $\mat{A}[I_{1}, I_{1}]$ whose values are still
temporary. By contrast, \emph{left-looking variants} perform all the
updates of a given value of $\mat{A}$ one after the other, allowing to
write each element only once.

Right-looking implementations of Cholesky are known to perform more
I/O operations than their left-looking counterparts, because the lower
right panel $\mat{A}_{I_{1}, I_{1}}$ needs to be reloaded at each
iteration, so as to be updated using the SYRK kernel. Nevertheless,
this overhead can be rendered negligible. Indeed, the main point of
\LBC is to use large enough blocks (of size $\sqrt{N}$), so that the
number of iterations is low ($\sqrt{N}$): then, the volume of communications induced
by loading $\mat{A}_{I_{1}, I_{1}}$ remain negligible compared to
the one required to update its values.

%
%

\subsubsection{Communication cost analysis}
Let us now analyze the total number of I/O operations required by
Algorithm~\LBC on an $N \times N$ matrix $\mat{A}$; it is
denoted $\qty{\lbc}{N}$. As mentioned above, we get
from~\cite{doi:10.1137/06067256X} that $\qty{\ooctrsm}{N, M} =
  \frac{N^2M}{\sqrt{S}} + \bigo{NM}$ and $\qty{\oocchol}{N} =
  \frac{N^3}{3\sqrt{S}} + \bigo{NM}$.
Furthermore, as detailed in
\ref{subsec:tbs}, we also know that $\qty{\tbs}{N, M} =
\frac{1}{\sqrt{2}} \frac{N^{2}M}{\sqrt{S}} +\frac{N^{2}}{2}
+\bigo{NM\log N}$. Then:

\begin{align*}
  \qty{\lbc}{N}
  & = \displaystyle\sum_{i = 1}^{\frac{N}{b}} {\qty{\oocchol}{b}}
  +{\qty{\ooctrsm}{b, \left(\tfrac{N}{b}-i\right)b}}
  +{\qty{\tbs}{\left(\tfrac{N}{b}-i\right)b, b}}\\
  & = \frac{N}{b} \qty{\oocchol}{b}
  +\displaystyle\sum_{i = 1}^{\frac{N}{b}}
  \qty{\ooctrsm}{b, ib}
  +\qty{\tbs}{ib, b} \\
  \begin{split}
    &= \frac{b^{2}N}{3\sqrt{S}}+\bigo{b^{2}} +\\
    &\qquad +\displaystyle\sum_{i = 1}^{\frac{N}{b}} \left(
    \frac{b^{2} (ib)}{\sqrt{S}}
    +\frac{b (ib)^{2}}{\sqrt{2}\sqrt{S}}+\frac{(ib)^{2}}{2}+\bigo{b^{2}
      i\log(ib)} \right)
  \end{split}
\end{align*}

Since $0 < b < N$, $\bigo{b^{2}} = \bigo{N^{2}} $.

Besides:
$$\sum_{i = 1}^{\frac{N}{b}} \bigo{b^{2} i\log(ib)}
  \leqslant \sum_{i = 1}^{\frac{N}{b}} \bigo{b^{2} \tfrac{N}{b}\log N}
  = \frac{N}{b} \bigo{Nb\log N}
  = \bigo{N^{2}\log N}$$.

The number of data transfers necessary to perform algorithm \LBC is therefore:

\begin{align*}
  \qty{\lbc}{N}
  & \leqslant \frac{b^{2}N}{3\sqrt{S}}
  +\sum_{i = 1}^{\frac{N}{b}} \left(
  \frac{b^{2} (ib)}{\sqrt{S}}
  +\frac{b (ib)^{2}}{\sqrt{2}\sqrt{S}}+\frac{(ib)^{2}}{2} \right)
  +\bigo{N^{2}\log N} \\
  & \leqslant \frac{b^{2}N}{3\sqrt{S}}
  +\frac{b^{3}(\frac{N}{b})^{2}}{2\sqrt{S}} 
  +\frac{b^{3}(\frac{N}{b})^{3}}{3\sqrt{2}\sqrt{S}} 
  +\frac{b^{2}(\frac{N}{b})^{3}}{6} 
  +\bigo{N^{2}\log N} \\
  & \leqslant \underbrace{\frac{b^{2}N}{3\sqrt{S}}}_{(1)}
  +\underbrace{\frac{b N^{2}}{2\sqrt{S}}}_{(2)}
  +\underbrace{\frac{N^{3}}{3\sqrt{2}\sqrt{S}}}_{(3)}
  +\underbrace{\frac{\frac{N^{3}}{b}}{6}}_{(4)}
  +\; \bigo{N^{2}\log N}
\end{align*}

As previously discussed the volume of data transfers induced by
loading $\mat{A}_{I_{1}, I_{1}}$ at each step (4) clearly becomes
dominant if $b$ is a constant. On the other hand, if the chosen value
for $b$ is of order $N$, the communications required to perform all
TRSM operations (2) becomes dominant. Hence, to ensure that the volume
of data transfers used for $\mat{A}_{I_{1}, I_{1}}$ update (3) is the
only dominant term in the formula, we choose to implement \LBC using $b
= \sqrt{N}$ as block size. Then:

\begin{align*}
  \qty{\lbc}{N}
  &\leqslant \frac{N^2}{3\sqrt{S}}
  +\frac{N^{2}\sqrt{N}}{2\sqrt{S}}
  +\frac{N^{3}}{3\sqrt{2}\sqrt{S}}
  +\frac{N^{2}\sqrt{N}}{6}
  +\bigo{N^{2}\log N}\\
  & = \frac{N^{3}}{3\sqrt{2}\sqrt{S}}
  +\bigo{N^{5/2}}
\end{align*}

\begin{theorem}
  The total communication cost $\qty{\lbc}{N}$ of the \LBC
  algorithm for a matrix $\mat{A}$ of size $N\times N$, with a memory of
  size $S$, is bounded by:

 $$\qty{\lbc}{N} \leqslant
  \frac{1}{3\sqrt{2}}\cdot\frac{N^3}{\sqrt{S}} + \bigo{N^{\frac{5}{2}}}.$$
\end{theorem}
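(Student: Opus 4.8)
The plan is to bound $\qty{\lbc}{N}$ by summing, over the $N/b$ iterations of Algorithm~\ref{algo:lbc}, the I/O costs of the three subroutines it calls, and then to choose the block size $b$ so that only the \TBS contribution survives in the leading term. At iteration $i$, the diagonal factorization acts on a $b\times b$ block and costs $\qty{\oocchol}{b}$; the panel solve acts on a block of size $b\times(\tfrac{N}{b}-i)b$ and costs $\qty{\ooctrsm}{b, (\tfrac{N}{b}-i)b}$; and the trailing update is a symmetric rank-$b$ update of a matrix of size $(\tfrac{N}{b}-i)b$, costing $\qty{\tbs}{(\tfrac{N}{b}-i)b, b}$. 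Substituting the known complexities $\qty{\oocchol}{b}=\frac{b^3}{3\sqrt{S}}+\bigo{b^2}$, $\qty{\ooctrsm}{N,M}=\frac{N^2M}{\sqrt{S}}+\bigo{NM}$ and $\qty{\tbs}{N,M}=\frac{1}{\sqrt{2}}\frac{N^2M}{\sqrt{S}}+\frac{N^2}{2}+\bigo{NM\log N}$, and re-indexing so that the trailing dimension runs through the values $ib$, reduces everything to two elementary sums.

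The second step is to evaluate these sums. Using $\sum_{i=1}^{N/b} i \sim \frac{1}{2}(\tfrac{N}{b})^2$ and $\sum_{i=1}^{N/b} i^2 \sim \frac{1}{3}(\tfrac{N}{b})^3$, the expression collapses to four leading terms plus lower-order contributions. The diagonal factorizations yield $\frac{b^2N}{3\sqrt{S}}$; the TRSM solves yield $\frac{bN^2}{2\sqrt{S}}$; the leading part of \TBS yields the $b$-independent term $\frac{N^3}{3\sqrt{2}\sqrt{S}}$; and the $\frac{N^2}{2}$ overhead of \TBS — which reflects reloading the trailing matrix at every iteration, the characteristic penalty of a right-looking scheme — yields $\frac{N^3}{6b}$. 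The error terms $\bigo{NM\log N}$ coming from \TBS must be summed with some care; bounding $\log(ib)$ by $\log N$ shows they contribute only $\bigo{N^2\log N}$.

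The final and decisive step is the choice of $b$. The target leading term $\frac{N^3}{3\sqrt{2}\sqrt{S}}$ does not depend on $b$, so it suffices to make the three remaining terms $\bigo{N^{5/2}}$. These terms pull in opposite directions: the diagonal and TRSM terms grow with $b$, while the right-looking reload term $\frac{N^3}{6b}$ shrinks with $b$. Balancing the TRSM term $\frac{bN^2}{2\sqrt{S}}$ against the reload term $\frac{N^3}{6b}$ forces $b\sim\sqrt{N}$. With $b=\sqrt{N}$ one checks directly that the diagonal term is $\bigo{N^2}$, while both the TRSM term and the reload term are $\bigo{N^{5/2}}$, so all three are absorbed into the error and the claimed bound follows.

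I expect the only genuine difficulty to be this balancing: one must verify that no choice of $b$ can simultaneously render the right-looking reload overhead negligible and keep the TRSM cost below the target, and that $b=\sqrt{N}$ is exactly the trade-off point at which both drop to $\bigo{N^{5/2}}$. Everything else is routine asymptotic bookkeeping, and the correctness (as opposed to the cost) of \LBC follows from the fact that its three subroutines together implement one standard right-looking block Cholesky step.
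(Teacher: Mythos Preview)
Your proposal is correct and follows essentially the same approach as the paper: sum the costs of \OOCCHOL, \OOCTRSM and \TBS over the $N/b$ iterations, reduce via $\sum i$ and $\sum i^2$ to the four terms $\frac{b^2N}{3\sqrt{S}}$, $\frac{bN^2}{2\sqrt{S}}$, $\frac{N^3}{3\sqrt{2}\sqrt{S}}$, $\frac{N^3}{6b}$ plus $\bigo{N^2\log N}$, and take $b=\sqrt{N}$ to absorb everything but the third term into $\bigo{N^{5/2}}$. The only minor over-statement is in your last paragraph: for the theorem as stated you need not verify that no other choice of $b$ works, only that $b=\sqrt{N}$ does.
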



\section{Conclusion}

This paper provides a definitive answer to the asymptotic
communication complexities of both the SYRK and Cholesky kernels. The
perhaps surprising answer is that the symmetric nature of these
computations can actually be taken advantage of, so that their
operational intensities are intrinsically higher than those of their
non-symmetric counterparts (matrix multiplication and LU
factorization). In addition to our theoretical lower bound results,
our algorithms provide insights about how to make use of the symmetry
to reduce communications. In future works, it might be possible to
improve the lower order terms of our results, to obtain efficient
algorithms for not so large values of $N$. More importantly, we
believe that the insight provided by this paper can be a starting
point to obtain communication efficient parallel algorithms for
symmetric linear algebra kernels. Finally, our work might also be
extended to other kernels which use the same input several times.

\begin{acks}
This work is supported in part by the Région Nouvelle-Aquitaine, under grant
  2018-1R50119 "HPC scalable ecosystem", and by the ANR, under grant SOLHARIS - ANR-19-CE46-0009.
\end{acks}

\bibliographystyle{ACM-Reference-Format}
\bibliography{./biblio,./article}


\end{document}